\newcommand\lineWidth{0.75pt}
\newcommand\markSize{2.5}
\pgfplotsset{
/pgfplots/short line/.style={
    legend image code/.code={
        \draw[mark repeat=2,mark phase=2,#1] 
            plot coordinates {(0cm,0cm) (0.1275cm,0cm) (0.245cm,0cm)};
    },
},
}
\pgfplotsset{compat=newest}
\let\oldnl\nl%
\newcommand{\nonl}{\renewcommand{\nl}{\let\nl\oldnl}}%
\newcommand{\algrule}[1][.2pt]{\par\vskip.5\baselineskip\hrule height #1\par\vskip.5\baselineskip}
\patchcmd\algocf@Vline{\vrule}{\vrule \kern-0.4pt}{}{}
\patchcmd\algocf@Vsline{\vrule}{\vrule \kern-0.4pt}{}{}
\tikzset{>=latex}
\DeclareMathAlphabet{\mathcal}{OMS}{cmsy}{m}{n}
\DeclareMathAlphabet{\mathbb}{U}{msb}{m}{n}
\newcommand{\bigsetxor}{\mathbin{
    \begin{tikzpicture}[baseline=-0.3pt]
    \node[isosceles triangle,
	isosceles triangle apex angle=52, rotate=90, draw, line width=0.4pt, inner sep=0pt, minimum size=0.35cm] {};
    \end{tikzpicture}
}}
\DeclareMathOperator{\setxor}{\triangle}
\DeclareMathOperator{\w}{\textit{w}_\mathrm{H}}
\DeclareMathOperator{\wmin}{\textit{w}_\mathrm{min}}
\DeclareMathOperator{\Awmin}{\textit{A}_{\textit{w}_\mathrm{min}}}
\DeclareMathOperator{\dmin}{\textit{d}_\mathrm{min}}
\DeclareMathOperator{\Admin}{\textit{A}_{\textit{d}_\mathrm{min}}}
\DeclareMathOperator{\C}{\mathcal{P}(\mathcal{I}, \boldsymbol{T})}
\DeclareMathOperator{\Ci}{\mathcal{P}_\textit{i}(\mathcal{I}, \boldsymbol{T})}
\DeclareMathOperator{\Qi}{\mathcal{Q}_\textit{i}(\mathcal{I}, \boldsymbol{T})}
\DeclareMathOperator{\Qiwmin}{\mathcal{Q}_{\textit{i},\wmin}(\mathcal{I}, \boldsymbol{T})}
\DeclareMathOperator{\RM}{\mathcal{RM}}
\newcommand{\ZZ}{\mathbb{Z}}
\newcommand{\ie}{i.\,e.,\ }
\definecolor{mittelblau}{RGB}{0, 126, 198}
\definecolor{violettblau}{cmyk}{0.9, 0.6, 0, 0}
\definecolor{rot}{RGB}{238, 28 35}
\definecolor{apfelgruen}{RGB}{140, 198, 62}
\definecolor{gelb}{RGB}{255, 229, 0}
\definecolor{orange}{RGB}{244, 111, 33}
\definecolor{pink}{RGB}{237, 0, 140}
\definecolor{lila}{RGB}{128, 10, 145}
\definecolor{hellgrau}{RGB}{224, 224, 224}
\definecolor{mittelgrau}{RGB}{128, 128, 128}
\definecolor{dunkelgrau}{RGB}{80,80,80}
\definecolor{anthrazit}{RGB}{19, 31, 31}
\definecolor{darkgreen}{RGB}{34,139,34}
\definecolor{aqua}{RGB}{0, 255, 255}
\definecolor{neuesgruen}{RGB}{61, 173, 65}
\definecolor{dunklereshellgrau}{RGB}{176, 176, 176}
\definecolor{lightgray}{RGB}{211,211,211}
\definecolor{neuesgelb}{RGB}{255,160,0}
\definecolor{neuescyan}{RGB}{69,185,224}
\definecolor{neuesmagenta}{RGB}{197,67,143}
\definecolor{grape}{HTML}{6945C2}
\definecolor{tyrianpurple}{HTML}{6D7BFF}
\definecolor{navyblue}{HTML}{172370}
\definecolor{bluegrotto}{HTML}{009277}
\definecolor{redB}{HTML}{CE054A}
\newtheorem{theorem}{Theorem}
\newtheorem{definition}{Definition}
\newtheorem{corollary}{Corollary}
\newtheorem{proposition}{Proposition}
\newtheorem{lemma}{Lemma}
\newtheorem{example}{Example}
\newcommand\ourTitle{Enumeration of Minimum Weight Codewords of Pre-Transformed Polar Codes by Tree Intersection}
\newif\ifPagebreaks
\begin{document}
\begin{NoHyper}
\title{\ourTitle}

\author{\IEEEauthorblockN{Andreas Zunker, Marvin Geiselhart and Stephan ten Brink}
	\IEEEauthorblockA{
		Institute of Telecommunications, Pfaffenwaldring 47, University of  Stuttgart, 70569 Stuttgart, Germany 
		\\\{zunker,geiselhart,tenbrink\}@inue.uni-stuttgart.de\\
	}
    \thanks{This work is supported by the German Federal Ministry of Education and Research (BMBF) within the project Open6GHub (grant no. 16KISK019).}
}

\maketitle

\begin{abstract}
    \Acp{PTPC} form a class of codes that perform close to the finite-length capacity bounds.
The minimum distance and the number of minimum weight codewords are two decisive properties for their performance.
In this work, we propose an efficient algorithm for determining the number of minimum weight codewords of general \acp{PTPC} that eliminates all redundant visits to nodes of the search tree, thus reducing the computational complexity typically by several orders of magnitude compared to state-of-the-art algorithms.
This reduction in complexity allows, for the first time, the minimum distance properties to be directly considered in the code design of \acp{PTPC}.
The algorithm is demonstrated for randomly pre-transformed \ac{RM} codes and \ac{PAC} codes. 
Furthermore, we design optimal polynomials for \ac{PAC} codes with this algorithm, minimizing the number of minimum weight codewords.

\end{abstract}
\acresetall

\section{Introduction}

While asymptotically achieving the channel capacity  \cite{ArikanMain}, the performance of polar codes in the finite block length regime remains limited by their suboptimal achievable weight spectrum, i.e., the minimum distance $\dmin$ and the large number of minimum weight codewords $A_{\dmin}$, which are the main source of non-correctable errors.
A key advancement to improve the distance spectrum is the removal of low-weight codewords through the introduction of precoding. %
The resulting code is a polar subcode \cite{subCodes} and will be referred to as a \ac{PTPC} hereafter.
Proper precoder design results in very powerful coding schemes that can be decoded by \ac{SCL} with little additional complexity overhead. 
Several precoders have been proposed in the past:
In \cite{talvardyList}, \ac{CRC} codes have been suggested as precoders to prevent incorrect decisions of the \ac{SCL} decoder. %
Similarly, the authors of \cite{subCodes} %
propose parity checks that can be seen as dynamic frozen bits.
Outer convolutional codes have been proposed in \cite{arikan2019pac}, resulting in so-called \ac{PAC} codes that show performance close to the finite block length bounds. %

These advancements went hand in hand with the development of algorithms for counting minimum weight codewords of polar codes and \acp{PTPC}. 
The statistical method proposed in  \cite{Li2021Weightspectrum} investigates the improvement of the distance spectrum by ensembles of random precoders. This method is extended to an asymptotic analysis in \cite{li2023weightspecturmimprovement}.
In contrast, deterministic/exact methods analyze a concrete realization of a plain or pre-transformed polar code.
For plain polar codes, \cite{bardet_polar_automorphism} proposes to use symmetries (permutations) of polar codes to generate all minimum weight codewords from minimum weight generators. While of low complexity, the algorithm is limited to polar codes following the partial order.
This method is extended to obtain the full weight distribution of polar codes up to length 128 \cite{Yao2021polarweightdistribution}.
\cite{Rowshan2023Impact} proposes to count minimum weight codewords by finding combinations of rows that generate them.
Although originally designed for plain polar codes and related to \cite{bardet_polar_automorphism} as shown in \cite{RowshanFormation}, this method is also applicable to \acp{PTPC} such as \ac{PAC} codes. 
While more complex, algorithms for more terms of the weight distribution are proposed in \cite{Miloslavskaya2022partialweightdistribution} and \cite{PartialEnumPAC}.
Due to the unfavorable scaling of these algorithms, however, it was not yet feasible to compute the number of minimum weight codewords for long polar codes \cite{li2023weightspecturmimprovement}.
Our contributions are:
\begin{itemize}
    \item We propose an improved algorithm for counting the minimum weight codewords of \acp{PTPC}.
    Compared to \cite{Rowshan2023Impact}, the proposed algorithm eliminates all redundant visits to each coset in the search tree and is therefore significantly less complex.
    Moreover, the proposed algorithm supports arbitrary upper-triangular precoding rather than only convolutional precoding.

    \item We find bounds on the minimum distance $\dmin$ and the number of minimum weight codewords $\Admin$ of a \ac{PTPC} based on its rate-profile.
    
    \item We demonstrate our low-complexity weight-enumeration algorithm by providing numerical results for very long codes (such as pre-transformed $\RM(10,21)$ codes with ${N=2097152}$), for which only probabilistic estimates existed \cite{li2023weightspecturmimprovement}, and list optimal polynomials for \ac{PAC} codes.
\end{itemize}

\section{Preliminaries}

\subsection{Notations}
The set of integers between $i$ and $j-1$ is denoted by $\mathbb{Z}_{i:j}$, with the shorthand notation $\mathbb{Z}_{j}=\mathbb{Z}_{0:j}$.
An integer $i \in \mathbb Z_{2^n}$ is implicitly represented by its $n$-bit \ac{MSB}-first binary expansion ${i = \sum_{l=0}^{n-1} i_{(l)}\cdot 2^l \mathrel{\widehat{=}} i_{(n-1)} \dots i_{(1)} i_{(0)}}$, with support $\mathcal{S}_i=\operatorname{supp}(i)$,
complementary support set $\mathcal{S}_i^\mathrm{c}=\mathbb{Z}_n\setminus\mathcal{S}_i$ and Hamming weight $\w(i)=\left|\mathcal{S}_i\right|$. 
Operations \textit{NOT}, \textit{AND}, and \textit{OR} are applied bitwise and denoted by $\bar i$, $i \land j$, and $i\lor j$, respectively.
The symmetric difference of two sets $\mathcal{A}$ and $\mathcal{B}$ is denoted as $\mathcal{A} \setxor \mathcal{B} = \left(\mathcal{A} \setminus \mathcal{B}\right) \cup \left(\mathcal{B} \setminus \mathcal{A}\right)$.

\subsection{Error Probability of Binary Linear Block Codes}
Let $\w(\boldsymbol c)$ denote the Hamming weight of a binary vector $\boldsymbol c \in \mathbb{F}_2^N$.
The minimum distance a of binary linear block code $\mathcal C(N,K)$ with code length $N$ and code dimension $K$ is
\begin{equation*}
    \dmin(\mathcal{C}(N,K)) = \min\limits_{\boldsymbol{c},\boldsymbol{c}' \in \mathcal {C},\, \boldsymbol{c} \neq \boldsymbol {c}'} \w(\boldsymbol{c} \oplus \boldsymbol{c}') = \min\limits_{\boldsymbol {c} \in \mathcal{C},\, \boldsymbol{c} \neq \boldsymbol{0}} \w(\boldsymbol{c}).
\end{equation*}
The number of codewords $\boldsymbol{c} \in \mathcal{C}(N,K)$ with Hamming weight $w$ is given by
\begin{equation*}
    A_w\left(\mathcal{C}(N,K)\right) = \left| \left\{\boldsymbol{c} \in \mathcal{C}(N,K) \,\middle|\, \w(\boldsymbol{c}) = w \right\} \right|  %
\end{equation*} 
and the ordered set $\mathcal{A}_{\mathcal{C}} = \left\{A_0,\,A_1,\,\dots,\,A_N\right\}$ is called the weight spectrum of code $\mathcal{C}(N,K)$. 
For \ac{BPSK} mapping and transmission over an \ac{AWGN} channel with \ac{SNR} $E_\mathrm{b}/N_0$, the \ac{FER} under \ac{ML} decoding $P^\mathrm{ML}_\mathrm{FE}$ of a binary linear block code $\mathcal C(N,K)$ with code rate $R = K/N$ can be upper bounded by the union bound \cite{UnionBound} as 
\begin{equation} %
    P^\mathrm{ML}_\mathrm{FE} \leq P^\mathrm{UB}_\mathrm{FE} = \sum\limits_w A_w \cdot Q\left(\sqrt{2 w \cdot R \cdot \frac{E_\mathrm{b}}{N_0}}\right), \label{eq:unionbound}
\end{equation}
where $Q(\cdot)$ is the complementary cumulative distribution function of the standard normal distribution.
As the \ac{SNR} increases, the contribution of codewords with high weight $w$ in the computation of $P^\mathrm{UB}_\mathrm{FE}$ vanishes. 
Hence, (\ref{eq:unionbound}) is well approximated by the first few non-zero terms (\emph{truncated union bound}).
Thus, the minimum distance $\dmin$ and the number of minimum-weight codewords $A_{\dmin}$ are two decisive properties for the achievable \ac{ML} performance of a code.

\subsection{Polar Codes}
Polar codes are constructed on the basis of the $n$-fold application of the polar transform, resulting in polarized synthetic bit channels \cite{ArikanMain}. Reliable bit channels, denoted by the information set $\mathcal{I}$, are used for information transmission, while the remaining unreliable bit channels $\mathcal{F}=\ZZ_N \setminus \mathcal{I}$ (frozen set) are set to zero. 
The generator matrix $\boldsymbol G = (\boldsymbol G_{N})_{\mathcal I}$ of an $(N,K)$ polar code  $\mathcal P(\mathcal I)$ is formed by those $K$ rows of the matrix $\boldsymbol G_N = \begin{bsmallmatrix} 1 & 0 \\ 1 & 1 \end{bsmallmatrix}^{\otimes n}$ indicated by $\mathcal{I}$, where $(\cdot)^{\otimes n}$ denotes the $n$-th Kronecker power and $N = 2^n$.
Let $\boldsymbol m = \boldsymbol u_\mathcal{I} \in \mathbb F_2^{K}$ denote the message and  $\boldsymbol u_\mathcal{F} = \boldsymbol 0$ the frozen bits. Then, polar encoding can be written as $\boldsymbol c = \boldsymbol m \cdot \boldsymbol G = \boldsymbol u \cdot \boldsymbol G_N$. %

Although the optimal code design $\mathcal I$ depends on the transmission channel, some synthetic bit channels are always more reliable than others \cite{bardet_polar_automorphism}. 
Thus, the bit channels exhibit a \textit{partial order}, in which $i \preccurlyeq j$ denotes that channel $j$ is at least as reliable as channel $i$, defined by:
\begin{enumerate}
  \item \textit{Left swap}: If $i_{(l)} > j_{(l)}$, and $i_{(l+1)} < j_{(l+1)}$ for ${l\in \mathbb Z_{n-1}}$, while all other $i_{(l')} = j_{(l')}$, $l' \notin \{l,l+1\}$, then $i \preccurlyeq j$. %
  \item \textit{Binary domination}: If $i_{(l)}\le j_{(l)}$ for all $l\in \ZZ_n$, then $i \preccurlyeq j$. 
\end{enumerate}
All remaining relations are derived from transitivity.
A polar code $\mathcal P(\mathcal I)$ is called a \textit{decreasing monomial code} and said to comply with the partial order ``$\preccurlyeq$'' if and only if
$\mathcal I$ fulfills the partial order as $\forall i \in \mathcal I,\;\forall j \in \mathbb Z_N$ with $i \preccurlyeq j \implies j \in \mathcal I$.

\ac{RM} codes are a special case of decreasing monomial codes.
The \ac{RM} code of order $r$ and length ${N=2^n}$, denoted by $\RM(r,n)$, is exactly the plain polar code $\mathcal{P}(\mathcal{I})$ with information set ${\mathcal{I} = \left\{i \in \mathbb{Z}_N \,\middle|\,\w(i) \geq n-r\right\}}$. An $\RM(r,n)$ code has minimum distance ${d_\mathrm{min} = 2^{n-r}}$.

\subsection{Pre-Transformed Polar Codes (PTPCs)}
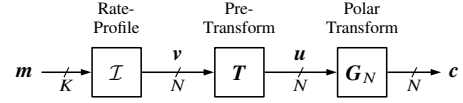
\begin{figure}[htp]
    \vspace{-0.2cm}
	\centering
	\resizebox{.7\columnwidth}{!}{\colorlet{dF}{orange!70!white}%
\begin{tikzpicture}[
    start chain = 1 going below,    node distance = 0pt,
    cell/.style={draw, fill=white, minimum width=1em, minimum height=1em, inner sep=0pt,
                outer sep=0pt, on chain},
    long/.style={draw, fill=white, minimum width=1em, minimum height=2em,  inner sep=0pt,
                outer sep=0pt, on chain},
    tick/.pic = {
    		\draw[line width=0.5pt] (-0.4mm,-0.8mm) -- (0.4mm,0.8mm);
    	}
  ]
  
\node[] (m) at (0.5, 0) {$\boldsymbol m$};
\node[dspsquare] (rp) at (2, 0) {$\mathcal I$};
\node[align=center, anchor=north, above = 0.1cm of rp]  {\footnotesize Rate-\\[-3pt]\footnotesize Profile};
\node[dspsquare] (pt) at (4, 0) {$\boldsymbol T$};
\node[align=center, anchor=north, above = 0.1cm of pt]  {\footnotesize Pre-\\[-3pt]\footnotesize Transform};
\node[dspsquare] (gn) at (6, 0) {$\boldsymbol G_N$};
\node[align=center, anchor=north, above = 0.1cm of gn]  {\footnotesize Polar\\[-3pt]\footnotesize Transform};

\node[] (c) at (7.5, 0) {$\boldsymbol c$};

\draw[dspconn](m)  -- node[] (tickM) {} node[below] {\footnotesize $K$} (rp);
\draw[dspconn](rp) -- node[above,yshift=2pt] {$\boldsymbol v$} node[] (tickV) {} node[below] {\footnotesize $N$}(pt);
\draw[dspconn](pt) -- node[above,yshift=2pt] {$\boldsymbol u$} node[] (tickU) {} node[below] {\footnotesize $N$}(gn);
\draw[dspconn](gn) -- node[] (tickC) {} node[below] {\footnotesize $N$} (c);

\pic at (tickM) {tick};
\pic at (tickV) {tick};
\pic at (tickU) {tick};
\pic at (tickC) {tick};

\end{tikzpicture}}
	\caption{\footnotesize Block diagram of pre-transformed polar encoding. %
   }
	\label{fig:Encoder}
\end{figure}

A polar code is \emph{pre-transformed}, if the message $\boldsymbol{m}$ is transformed by a (linear) mapping before the polar transform~$\boldsymbol G_N$. 
Fig.~\ref{fig:Encoder} shows the encoder block diagram of a \ac{PTPC} $\mathcal{P}(\mathcal{I},\boldsymbol{T})$.
In the context of \acp{PTPC}, $\mathcal I$ is called the \emph{rate-profile}.
The pre-transformation is sandwiched between the insertion of the frozen bits (rate-profiling) and the polar transform \cite{arikan2019pac}. It can be described by a multiplication with an upper-triangular pre-transformation matrix $\boldsymbol T \in \mathbb F_2^{N \times N}$, \ie ${t_{i,j}=0}$ if $i > j$ and ${t_{i,i}=1}$ for $i \in \mathcal I$.
From the rate-profiled message $\boldsymbol{v}$, where $\boldsymbol{v}_{\mathcal{I}} = \boldsymbol{m} \in \mathbb{F}_2^K$ and $\boldsymbol{v}_\mathcal{F} = \boldsymbol{0}$,
the pre-transformed message is found as $\boldsymbol{u} = \boldsymbol{m} \cdot \boldsymbol{T}_\mathcal{I}  = \boldsymbol v \cdot \boldsymbol T$.
Finally, the codewords $\boldsymbol c = \boldsymbol u \cdot \boldsymbol G_N$ are obtained by applying the polar transform.
In contrast to plain polar codes, the pre-transformation introduces dynamic frozen bits $f \in \mathcal F$ whose values are dependent on the previous message bits as $u_f = \bigoplus_{h=0}^{f-1} v_h \cdot t_{h,f}$.

Note that this perspective on \acp{PTPC} differs from conventional code concatenation, where both inner and outer code have a rate $R < 1$. In contrast, \acp{PTPC} as used in this paper add redundancy solely using the rate-profile $\mathcal{I}$, while the following transformations $\boldsymbol{T}$ and $\boldsymbol{G}_N$ are rate-1. 
This way, many different variants of polar codes (such as plain and \ac{CRC}-aided polar codes, \ac{PAC} codes, etc.) can be described under the umbrella of \acp{PTPC}\footnote{Any binary linear code can be described as an equivalent \ac{PTPC} \cite{subCodes}.}. 
For plain polar codes, simply $\boldsymbol T = \boldsymbol I_{N}$.
In the case of \ac{PAC} codes, the pre-transformation is given by the convolution of $\boldsymbol{v}$ with a generator polynomial $p(x) = \sum_{h=0}^q p_h \cdot x^h$
of degree $\operatorname{deg}(p(x))=q$.

\section{Counting Minimum Weight Codewords of Pre-Transformed Polar Codes}\label{sec:counting}

Our goal is to find the minimum distance $\dmin$ and the corresponding number of $\dmin$-weight codewords $\Admin$ of a \ac{PTPC} $\C$.
For this, we will exploit the underlying structure of polar codes. 
Thus, our method is best suited for \acp{PTPC} $\C$ with polar-like rate-profile~$\mathcal I$ and upper-triangular pre-transformation matrix~$\boldsymbol{T}$.
To simplify the problem, the general approach is to split the code $\C = \{\boldsymbol{0}\} \cup \bigcup_{i \in \mathcal I} \Ci$ into $K=|\mathcal I|$ distinct \emph{cosets} ${\Ci = \left\{\boldsymbol{u} \cdot \boldsymbol{G}_N \,\middle|\, \boldsymbol{u} \in \Qi\right\}}$ which 
are obtained by the one-to-one polar transform mapping $\boldsymbol{G}_N$ of the pre-transformed messages
\begin{equation*}%
    \Qi = \Big\{\boldsymbol{t}_i \oplus \bigoplus_{h \in \mathcal{H}} \boldsymbol{t}_h\,\Big|\,\mathcal{H} \subseteq \mathcal{I} \setminus \mathbb{Z}_{i+1} \Big\}.
\end{equation*}
In other words, $\Ci$ contains all codewords whose corresponding messages $\Qi$ have their first non-zero bit in position $i$. 
The row $\boldsymbol{g}_i$ is called the \emph{coset leader} of $\Ci$.
Instead of examining the whole code at once, each coset can be considered individually.
We subsequently denote \emph{plain} (i.e., without pre-transform) and \emph{universal} (i.e., without considering a specific rate-profile) polar cosets by ${\mathcal{P}_i(\mathcal I) = \mathcal{P}_i(\mathcal{I},\boldsymbol{T}=\boldsymbol{I}_N)}$ and $\mathcal{P}_i=\mathcal{P}_i(\mathcal I=\mathbb{Z}_N)$, respectively.
Their corresponding messages are denoted by $\mathcal{Q}_i(\mathcal{I}) = \mathcal{Q}_i(\mathcal{I},\boldsymbol{I}_N)$ and $\mathcal{Q}_i=\mathcal{Q}_i(\mathbb{Z}_N)$.

\subsection{Minimum Distance of \acp{PTPC}}
To find the number of $\dmin$-weight codewords, the minimum distance must be considered first.
For this, we review a lower bound on the minimum distance $\dmin(\C)$ of a \ac{PTPC}.
As shown in \cite{RowshanFormation}, any codeword in a universal coset $\mathcal P_i$ has at least the Hamming weight of the coset leader $\boldsymbol{g}_i$, i.e.,
\begin{equation}\label{eq:UTPdmin}
    \w\Big(\boldsymbol{g}_i \oplus \bigoplus\limits_{h \in \mathcal{H}} \boldsymbol{g}_h\Big) \geq \w(\boldsymbol{g}_i),
\end{equation}
for all $\mathcal H \subseteq \mathbb Z_{i+1:N}$.
From $\mathcal{P}_i(\mathcal{I}) \subseteq \mathcal{P}_i$ and $\boldsymbol{g}_i \in \mathcal{P}_i(\mathcal{I})$ follows that the minimum weight of the codewords of $\mathcal{P}_i(\mathcal{I})$ is $\w(\boldsymbol{g}_i)$.
The Hamming weight of a row $\boldsymbol g_i$ can be efficiently calculated as ${\w(\boldsymbol g_i)=2^{\w(i)}}$ \cite{RowshanFormation}.
Consequently, the minimum distance of a plain polar code $\mathcal{P}(\mathcal{I})$ is found as 
\begin{equation*}
    \dmin(\mathcal{P}(\mathcal{I}))=\min_{i \in \mathcal{I}}\left\{\w(\boldsymbol{g}_i)\right\}=\min_{i \in \mathcal{I}}\big\{2^{\w(i)}\big\}.
\end{equation*}
For an upper-triangular pre-transformation matrix $\boldsymbol{T}$ follows $\Ci \subseteq \mathcal{P}_i$.
Therefore, we find with (\ref{eq:UTPdmin}) that the codewords of $\Ci$ have at least the Hamming weight $\w(\boldsymbol{g}_i)$ of the coset leader $\boldsymbol{g}_i$.
Hence, a lower bound on the minimum distance of \acp{PTPC} can be derived as
\begin{equation}\label{eq:dmin lower bound}
    \dmin(\C) \geq \dmin(\mathcal{P}(\mathcal{I})) = \wmin.
\end{equation}
In the following, the lowest potential codeword weight $\wmin$ is a constant dependent only on the (fixed) rate-profile $\mathcal{I}$ of the given \ac{PTPC} $\C$.

\subsection{Minimum Weight Codewords of \acp{PTPC}}
Following the lower bound in (\ref{eq:dmin lower bound}), two cases occur for the minimum distance $\dmin$ of a \ac{PTPC} $\C$:
\begin{enumerate}
    \item $\dmin(\C) = \wmin$: \newline
    The minimum distance is equal to the lower bound and thus $\Awmin(\C) > 0$.
    We consider this case and propose an enumeration method that counts the codewords with weight $\wmin$ of a \ac{PTPC} $\C$.
    \item $\dmin(\C) > \wmin$: \newline
    The minimum distance is greater than the lower bound, as all $\wmin$-weight codewords have been eliminated by the pre-transform $\boldsymbol{T}$.
    In this case, the proposed algorithm will just find $\Awmin(\C) = 0$.
    Consequently, more general methods as introduced in \cite{Miloslavskaya2022partialweightdistribution} or \cite{PartialEnumPAC} have to be applied to find $\dmin$ and $\Admin$ of $\C$.
\end{enumerate}

Our method is based on the construction of $\wmin$-weight codewords of polar codes introduced in \cite{RowshanFormation}.
Following this formalism, the enumeration algorithm proposed in \cite{Rowshan2023Impact} counts the codewords of weight $\wmin$ of \ac{PAC} codes.
We will generalize this method to arbitrary \acp{PTPC} and reduce the complexity by eliminating all redundant operations.
From ${\Ci \subseteq \mathcal{P}_i}$ and (\ref{eq:UTPdmin}) follows that only cosets with coset index 
\begin{equation*}
    i \in \mathcal{I}_{\wmin} = \left\{j \in \mathcal{I} \,\middle|\, \w(\boldsymbol{g}_j) = \wmin\right\}
\end{equation*}
can contain codewords of weight $\wmin$.
Hence, only these cosets or the respective message sets which generate them have to be considered for the computation of $A_{\wmin}(\C)$.
The set of (pre-transformed) messages that form the $\wmin$-weight codewords $\mathcal{P}_{i,\wmin}(\mathcal{I},\boldsymbol{T})$ of a coset $\Ci$ is defined as
\begin{equation*}
    \mathcal{Q}_{i,\wmin}(\mathcal{I},\boldsymbol{T}) = \left\{\boldsymbol{u} \in \Qi \,\middle|\, \w(\boldsymbol{u} \cdot \boldsymbol{G}_N) = \wmin \right\}.
\end{equation*}
Consequently, 
the number of $\wmin$-weight codewords of a \ac{PTPC} can be found as 
\begin{equation}\label{eq:Awmin}
    A_{\wmin}(\C) = \sum_{i \in \mathcal{I}_{\wmin}} \left|\mathcal{Q}_{i,\wmin}(\mathcal{I},\boldsymbol{T})\right|.
\end{equation}
The formalism introduced in \cite{RowshanFormation} allows to find all messages ${\mathcal{Q}_{i,\wmin} = \mathcal{Q}_{i,\wmin}(\mathbb{Z}_N,\boldsymbol{I}_N)}$ that form the codewords with weight $\wmin=\w(\boldsymbol{g}_i)$ of a universal polar coset $\mathcal{P}_i$.
Since both $\mathcal{Q}_{i,\wmin} \subseteq \mathcal{Q}_i$ and $\Qi \subseteq \mathcal{Q}_i$, we can find the messages that generate the $\wmin$-weight codewords of a polar coset $\Ci$ by the intersection of the two considered message sets as
\begin{equation*}
    \mathcal{Q}_{i,\wmin}(\mathcal{I},\boldsymbol{T}) =  \Qi \cap \mathcal{Q}_{i,\wmin}.
\end{equation*}
The set of ``1'' bit positions $\operatorname{supp}(\boldsymbol{u})$ of a message $\boldsymbol{u}$ directly corresponds to the row combination $\boldsymbol{u} \cdot \boldsymbol{G}_N = \bigoplus_{h \in \operatorname{supp}(\boldsymbol{u})} \boldsymbol{g}_h$.
The construction of codewords of a universal polar coset $\mathcal{P}_i$ with weight 
\begin{equation*}
    \w(\boldsymbol{u} \cdot \boldsymbol{G}_N) = \w\Big( \underbrace{\boldsymbol{g}_i}_{\substack{\text{coset}\\\text{leader}}} \oplus \underbrace{\bigoplus\limits_{j \in \mathcal{J}} \boldsymbol{g}_j}_\text{core rows} \oplus \underbrace{\bigoplus\limits_{m \in \mathcal{M}_i(\mathcal{J})} \boldsymbol{g}_m}_\text{balancing rows}\Big) = \w(\boldsymbol{g}_i)
\end{equation*}
can be formulated by the combination of the coset leader $\boldsymbol{g}_i$ with a set of \emph{core rows} $\mathcal{J} \subseteq \mathcal{K}_i$ and dependent \emph{balancing rows} $\mathcal{M}_i(\mathcal{J}) \subseteq \mathcal{K}_i^\mathrm{c}$.
Thus, the set of ``1'' bit positions of a message $\boldsymbol{u}$ that forms a codeword of weight $\wmin=\w(\boldsymbol{g}_i)$ of $\mathcal{P}_i$ is 
\begin{equation}\label{eq:Ui set}
    \operatorname{supp}(\boldsymbol{u}(i,\mathcal{J})) = \mathcal{U}_i(\mathcal{J}) = \{i\} \cup \mathcal{J} \cup \mathcal{M}_i(\mathcal{J}).
\end{equation}
Hence, the set of all messages $\boldsymbol{u}$ that form the $\wmin$-weight codewords of a universal polar coset $\mathcal{P}_i$ is 
\begin{equation*}%
        \mathcal{Q}_{i,\wmin} = \mathcal{Q}_{i,\wmin}(\mathbb{Z}_N,\boldsymbol{I}_N) = \left\{\boldsymbol{u}(i,\mathcal{J})\,\middle|\,\mathcal{J} \subseteq \mathcal{K}_i\right\}.
\end{equation*}
The core rows $\mathcal{J} \subseteq \mathcal{K}_i$ are found as subsets of the (subsequent) rows that individually do not increase the weight of the coset leader $\boldsymbol g_i$ as
\begin{equation}\label{eq:ki}
    \mathcal K_i = \left\{ j \in \mathbb Z_{i+1:N} \,\middle|\, \w(\boldsymbol g_i \oplus \boldsymbol g_j) = \w(\boldsymbol g_i) \right\}.
\end{equation}
The corresponding complementary set is $\mathcal K_i^\mathrm{c} = \mathbb Z_{i+1:N} \setminus \mathcal K_i$.
As shown in \cite{RowshanFormation}, the Hamming weight of the sum of two rows can be computed as 
\begin{equation}\label{eq:merged row weight}
    \w(\boldsymbol g_i \oplus \boldsymbol g_j) = 2^{\w(i)} + 2^{\w(j)} - 2^{\w(i \land j)+1}.
\end{equation}
For $i < j$, $\w(j)-\w(i~\land~j)=\w(\bar i~\land~j)$ is fulfilled.
Thus, with (\ref{eq:merged row weight}) we find for $i < j$ that
\begin{equation}
    \label{eq:cond increase weight}
    \begin{split}
        j \in \mathcal K_i \Leftrightarrow \w(\boldsymbol g_i \oplus \boldsymbol g_j) = \w(\boldsymbol g_i) &\Leftrightarrow \w(\bar i \land j) = 1,\\
        j \in \mathcal K_i^\mathrm{c} \Leftrightarrow \w(\boldsymbol g_i \oplus \boldsymbol g_j) > \w(\boldsymbol g_i) &\Leftrightarrow \w(\bar i \land j) > 1,        
    \end{split}
\end{equation}
efficiently checking whether $j \in \mathcal{K}_i$ or $j \in \mathcal{K}_i^\mathrm{c}$.
As each core row-combination $\mathcal{J} \subseteq \mathcal{K}_i$ forms one $\wmin$-weight codeword, the number of these codewords in $\mathcal {P}_i$ is %
\begin{equation}\label{eq:num wmin Qiwmin}
    \left|\mathcal{Q}_{i,\wmin}\right| =  \sum_{j=0}^{\left|\mathcal{K}_i\right|} \binom{\left|\mathcal{K}_i\right|}{j} = 2^{\left|\mathcal{K}_i\right|}.
\end{equation}

In addition to a set of core rows $\mathcal J \subseteq \mathcal{K}_i^\mathrm{c}$, the balancing rows ${\mathcal M_i(\mathcal J) \subseteq \mathcal K_i^\mathrm{c}}$ are required to obtain a $\wmin$-weight codeword.
Following \cite[Algorithm~2]{Rowshan2023Impact}, the balancing row set $\mathcal M_i(\mathcal J)$ can be computed iteratively, starting with $\mathcal M_i(\emptyset) = \emptyset$, as
\begin{equation}\label{eq:Mset}
    \mathcal{M}_i\left(\tilde{\mathcal{J}} \cup \{j\}\right) = \mathcal{M}_i\left(\tilde{\mathcal{J}}\right) \setxor \tilde{\mathcal{M}}_i\left(j,\tilde{\mathcal{J}}\right),
\end{equation}
where $j \in \mathcal{J} \setminus \tilde{\mathcal{J}}$, $\tilde{\mathcal{J}} \subset \mathcal{J}$, and
\begin{equation}\label{eq:add to Mset}
    \tilde{\mathcal{M}}_i(j,\mathcal {J}) = \underset{{\substack{k \in \mathcal{J} \cup \mathcal{M}_i(\mathcal{J}),\\\bar{i} \land j \land k = 0}}}{\bigsetxor} \left\{\mu_i(j,k)\right\},
\end{equation}
with\footnote{The direct translation from \cite{Rowshan2023Impact} into our notation would be ${\mu_i(j,k) = (\bar i \land (j \lor k)) \lor (i \land j \land k)}$, but since ${\bar i \land j \land k = 0}$ implies that ${i \land j \land k = j \land k}$, we can simplify the expression.} ${\mu_i(j,k) = (\bar i \land (j \lor k)) \lor (j \land k)}$.
Using this formalism, we can compute all messages $\mathcal{Q}_{i,\wmin}$ that generate a codeword with weight $\wmin=\w(\boldsymbol{g}_i)$ of a universal polar coset $\mathcal{P}_i$.
Thus, the number of $\wmin$-weight codewords of a polar coset $\Ci$ can be found as
\begin{equation}\label{eq:intersection cardinality}
    \left|\mathcal{Q}_{i,\wmin}(\mathcal{I},\boldsymbol{T})\right| = \left|\Qi \cap \mathcal{Q}_{i,\wmin}\right|.
\end{equation}
A na\"ive approach to computing $\left|\mathcal{Q}_{i,\wmin}(\mathcal{I},\boldsymbol{T})\right|$ which involves finding all elements of $\Qi$ or $\mathcal{Q}_{i,\wmin}$ one by one naturally entails a high computational complexity.
In the following, we will propose a low-complexity method to compute (\ref{eq:intersection cardinality}).

\subsection{Message Trees}
Our approach is to represent the message sets $\Qi$ and $\mathcal{Q}_{i,\wmin}$ as \acp{PDBT} and then compute the intersection of the trees.
A \ac{PDBT} is a special \ac{DBT} which is defined as follows.
\begin{definition}[\Acf{DBT}]\label{def:dichotomous binary tree}
    \normalfont
    A \ac{DBT} is a binary tree that has, apart from the root node, two distinct types of levels:
    \begin{enumerate}
        \item \emph{Einzelchild levels}%
        $\mathcal{L}_1$: 
        All nodes do not have a sibling. 
        \item \emph{Sibling levels $\mathcal{L}_2$}: 
        All nodes share their parent node with a sibling node.
    \end{enumerate}   
    The value of each node is either ``0'' or ``1'', where a left child node is always ``0'' and a right child node is always ``1''. If all leaf nodes are at the last level, the tree is called a \ac{PDBT}.
\end{definition}

To prove that both message sets $\Qi$ and $\mathcal{Q}_{i,\wmin}$ can be represented as \acp{PDBT}, we first establish the following lemma.

\begin{lemma}\label{proposition:dichotomous binary tree}
    A set of binary vectors $\mathcal{B} \subseteq \mathbb{F}_2^N$ can be represented by the full paths through a \ac{PDBT} of height 
    $N$ if and only if $b_0=b_{\smash{0}}'$ for all ${\boldsymbol{b},\boldsymbol{b}' \in \mathcal{B}}$ and $\mathcal{B}$ has cardinality $\left|\mathcal{B}\right|=2^{\left|\mathcal{L}^*\right|}$, where
    \begin{equation} \label{eq:sibling levels}
        \mathcal{L}^* = \left\{\min\left\{l \in \mathbb{Z}_{1:N} \,\middle|\, b_l \neq b_{\smash{l}}'\right\}\,\middle|\,\boldsymbol{b},\boldsymbol{b}' \in \mathcal{B},\, \boldsymbol{b} \neq \boldsymbol{b}'\right\}
    \end{equation}  
    is the set of levels that contain sibling nodes.
\end{lemma}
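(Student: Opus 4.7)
The plan is to associate to each set $\mathcal{B}\subseteq\mathbb{F}_2^N$ its natural prefix tree $T$, in which the root represents the fixed first bit (when well-defined) and each subsequent node corresponds to a distinct prefix of some vector in $\mathcal{B}$. The full paths of $T$ then correspond bijectively to the vectors in $\mathcal{B}$, so the lemma reduces to asking when $T$ satisfies the uniformity requirement of Definition~\ref{def:dichotomous binary tree}.

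For the forward direction, suppose $T$ is a PDBT. Since the root is a single unique node, the first bit $b_0$ is forced to be the same across all vectors in $\mathcal{B}$. For the counting condition, I would show that $\mathcal{L}_2=\mathcal{L}^*$: the DBT uniformity per level means that either every node at a given level branches into two children or none does, so a level $l$ is a sibling level iff some length-$l$ prefix of $\mathcal{B}$ extends in both directions, i.e., iff $l\in\mathcal{L}^*$. Since each sibling level doubles the number of distinct full paths, $|\mathcal{B}|=2^{|\mathcal{L}_2|}=2^{|\mathcal{L}^*|}$.

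For the backward direction, suppose both conditions hold. Let $B_l$ denote the set of distinct length-$l$ prefixes of vectors in $\mathcal{B}$, and define the fan-out ratio $r_l=|B_{l+1}|/|B_l|$. Two basic observations suffice:
\begin{itemize}
\item $r_l\le 2$ always, since each prefix admits at most two extensions;
\item $r_l=1$ whenever $l\notin\mathcal{L}^*$, since the absence of a pair of vectors first differing at position $l$ means no prefix at that level extends in both directions.
\end{itemize}
Combining these with $b_0=b_0'$ (which forces $|B_1|=1$),
\begin{equation*}
    |\mathcal{B}|=|B_N|=\prod_{l=0}^{N-1}r_l\leq 2^{|\mathcal{L}^*|},
\end{equation*}
and the hypothesized equality $|\mathcal{B}|=2^{|\mathcal{L}^*|}$ forces $r_l=2$ for every $l\in\mathcal{L}^*$. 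But $r_l=2$ requires \emph{every} length-$l$ prefix to extend in both directions, so every node at the corresponding level of $T$ has two children. Hence each level of $T$ is either uniformly sibling or uniformly einzelchild, so $T$ is a DBT; its completeness to a PDBT of height $N$ follows because every vector has length exactly $N$, so all root-to-leaf paths terminate at the same depth.

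The main technical point is the equality case in the backward direction: converting the global identity $|\mathcal{B}|=2^{|\mathcal{L}^*|}$ into the local, level-wise uniformity the DBT definition demands. The product-of-ratios bound makes this transparent, but one must verify that saturating the global bound is only possible when every single local fan-out also saturates---this strict inequality for partially branching levels is precisely what rules out the ``mixed'' configurations that would otherwise break the DBT property.
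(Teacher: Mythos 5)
Your proof is correct and takes essentially the same route as the paper's: represent $\mathcal{B}$ by its prefix tree, identify the sibling levels with $\mathcal{L}^*$, and use the bound $\left|\mathcal{B}\right|\le 2^{\left|\mathcal{L}^*\right|}$, whose equality case forces full branching at every level of $\mathcal{L}^*$ and single children elsewhere. Your product-of-fan-out-ratios argument is merely a more explicit rendering of the paper's terse ``otherwise some node at a sibling level would lack a sibling'' step.
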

\begin{proof}
    ``$\Rightarrow$'':
    A \ac{PDBT} has sibling layers $\mathcal{L}_2$ and thus $2^{\left|\mathcal{L}_2\right|}$ leaf nodes.
    Consequently, the corresponding set $\mathcal{B}$ of full paths through the tree has cardinality $\left|\mathcal{B}\right|=2^{\left|\mathcal{L}_2\right|}$ and ${\mathcal{L}^*=\mathcal{L}_2}$.

    ``$\Leftarrow$'': 
    Any set $\mathcal{B} \subseteq \mathbb{F}_2^N$, where $b_0=b_{\smash{0}}'$ for all ${\boldsymbol{b},\boldsymbol{b}' \in \mathcal{B}}$, can be represented as a binary tree.
    Therefore, from (\ref{eq:sibling levels}) directly follows ${\left|\mathcal{B}\right| \leq 2^{\left|\mathcal{L}^*\right|}}$.
    For ${\left|\mathcal{B}\right| < 2^{\left|\mathcal{L}^*\right|}}$, at least one node at a sibling level $\mathcal{L}^*$ would have no sibling node. 
    Consequently, the set must have cardinality $\left|\mathcal{B}\right|=2^{\left|\mathcal{L}^*\right|}$ for the corresponding binary tree to be dichotomous.
\end{proof}

\begin{proposition}\label{proposition: Qi tree}
    The message set $\Qi$ corresponding to an arbitrary polar coset $\Ci$ can be represented as a \ac{PDBT} with sibling levels ${\mathcal{L}_2 = \mathcal{I} \setminus \mathbb{Z}_{i+1}}$.
\end{proposition}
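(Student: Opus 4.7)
The plan is to verify directly that $\Qi$ satisfies the two conditions of Lemma~\ref{proposition:dichotomous binary tree}, identifying the set $\mathcal{L}^*$ appearing there with $\mathcal{I} \setminus \mathbb{Z}_{i+1}$. The key structural fact I would rely on throughout is that, since $\boldsymbol{T}$ is upper triangular with $t_{h,h}=1$ for all $h \in \mathcal{I}$, the leading (smallest-index) nonzero coordinate of $\boldsymbol{t}_h$ is position $h$, and $t_{h,l}=0$ whenever $l<h$.

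First, I would check that all $\boldsymbol{b} \in \Qi$ agree on coordinate $0$. Writing $\boldsymbol{b} = \boldsymbol{t}_i \oplus \bigoplus_{h \in \mathcal{H}} \boldsymbol{t}_h$ with $\mathcal{H} \subseteq \mathcal{I} \setminus \mathbb{Z}_{i+1}$, every summand other than possibly $\boldsymbol{t}_i$ has index $h>i\geq 0$, so $t_{h,0}=0$. Hence $b_0 = t_{i,0}$ is constant over $\Qi$.

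Next, I would determine the cardinality of $\Qi$. Because the rows $\{\boldsymbol{t}_h\}_{h\in\mathcal{I}}$ have distinct leading positions, they are $\mathbb{F}_2$-linearly independent, so distinct choices of $\mathcal{H} \subseteq \mathcal{I} \setminus \mathbb{Z}_{i+1}$ yield distinct pre-transformed messages. Therefore $|\Qi| = 2^{|\mathcal{I}\setminus\mathbb{Z}_{i+1}|}$, and it remains only to show that the set $\mathcal{L}^*$ defined in (\ref{eq:sibling levels}) equals $\mathcal{I}\setminus\mathbb{Z}_{i+1}$.

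For the inclusion $\mathcal{L}^* \subseteq \mathcal{I}\setminus\mathbb{Z}_{i+1}$, I would take any two distinct $\boldsymbol{b},\boldsymbol{b}'\in \Qi$, write their difference as $\boldsymbol{b}\oplus\boldsymbol{b}' = \bigoplus_{h \in \mathcal{H}\setxor\mathcal{H}'}\boldsymbol{t}_h$ with $\mathcal{H}\setxor\mathcal{H}' \subseteq \mathcal{I}\setminus\mathbb{Z}_{i+1}$ nonempty, and let $h^* = \min(\mathcal{H}\setxor\mathcal{H}')$. Upper-triangularity forces every $\boldsymbol{t}_h$ in this XOR to vanish at positions $l<h^*$, while at position $h^*$ only $\boldsymbol{t}_{h^*}$ contributes (and contributes a $1$), so the first index of disagreement is precisely $h^* \in \mathcal{I}\setminus\mathbb{Z}_{i+1}$. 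For the reverse inclusion, for each $h \in \mathcal{I}\setminus\mathbb{Z}_{i+1}$ the choice $\mathcal{H}=\emptyset$, $\mathcal{H}'=\{h\}$ realises $h$ as the first differing position. Combining these two inclusions gives $\mathcal{L}^* = \mathcal{I}\setminus\mathbb{Z}_{i+1}$, so $|\Qi| = 2^{|\mathcal{L}^*|}$ and Lemma~\ref{proposition:dichotomous binary tree} applies, yielding the claim. I do not anticipate a real obstacle: the only subtlety is being careful with the upper-triangularity argument that identifies the minimum differing coordinate of a row sum with the minimum row index in the sum.
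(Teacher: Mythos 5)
Your proposal is correct and follows essentially the same route as the paper: both proofs reduce the claim to Lemma~\ref{proposition:dichotomous binary tree} by using the upper-triangular structure of $\boldsymbol{T}$ (with $t_{h,h}=1$ for $h\in\mathcal{I}$) to show $|\Qi|=2^{|\mathcal{I}\setminus\mathbb{Z}_{i+1}|}$ and that the branching (first-disagreement) positions are exactly $\mathcal{I}\setminus\mathbb{Z}_{i+1}$. The only difference is presentational: the paper reads this off from the explicit bit recursion $u_k = v_k \oplus t_{i,k} \oplus \bigoplus_{h=i+1}^{k-1} v_h\, t_{h,k}$, whereas you argue via linear independence of rows with distinct leading positions and the minimum index of $\mathcal{H}\setxor\mathcal{H}'$, which spells out the same fact in more detail.
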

\begin{proof}
    The individual bits of a pre-transformed message ${\boldsymbol{u} \in \Qi}$ are found as
    \begin{equation}\label{eq:pretransform naive}
        u_k = v_k \oplus t_{i,k} \oplus \bigoplus_{h=i+1}^{k-1} v_h \cdot t_{h,k},\; 
    \end{equation}
    with $v_i=1$, $v_k \in \mathbb{F}_2$ for $k \in \mathcal{I} \setminus \mathbb{Z}_{i+1}$ and $v_k=0$ otherwise.
    This directly implies ${\mathcal{L}_2 = \mathcal{I} \setminus \mathbb{Z}_{i+1}}$ and $\left|\Qi\right|=2^{\left|\mathcal{I} \setminus \mathbb{Z}_{i+1}\right|}$.
    Thus, by Lemma~\ref{proposition:dichotomous binary tree}, $\Qi$ can be represented as a \ac{PDBT}.
\end{proof}
Following Proposition~\ref{proposition: Qi tree}, a tree-based approach can be used to efficiently compute the messages $\boldsymbol{u} \in \Qi$.
The complexity of finding the pre-transformed messages $\boldsymbol{u}$ from messages $\boldsymbol{v}$, as in (\ref{eq:pretransform naive}), can be reduced by bringing pre-transform $\boldsymbol{T}$ into its systematic form $\boldsymbol{T}_\mathrm{sys}$, where $(\boldsymbol{T}_\mathrm{sys})_\mathcal{I}=\operatorname{RREF}(\boldsymbol{T}_\mathcal{I})$ is in \ac{RREF} and 
$(\boldsymbol{T}_\mathrm{sys})_\mathcal{F}=\boldsymbol{0}$.
Consequently, we find $\boldsymbol{u}_\mathcal{I}=\boldsymbol{v}_\mathcal{I}$.
Thus, the dynamic frozen bits at the einzelchild levels $f\in\mathcal{F} \setminus \mathbb{Z}_{i+1}$ can be computed as $u_f = \bigoplus_{h=i}^{f-1} u_h \cdot t_{h,f}$.

To show that the message set $\mathcal{Q}_{i,\wmin}$ can be characterized as a \ac{PDBT}, we additionally require the subsequent lemma.
\begin{lemma}\label{lemma:M construction}
    Let $i<j<k$. If $\bar i \land j \land k = 0$ is fulfilled, then ${\mu_i(j,k) = (\bar i \land (j \lor k)) \lor (j \land k)} > k$.
\end{lemma}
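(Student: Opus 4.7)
My plan is to compare $\mu_i(j,k)$ and $k$ bitwise from the most significant bit downward, and show that the largest position at which the two differ has $(\mu_i(j,k))_{(l)} = 1$ and $k_{(l)} = 0$; this immediately yields $\mu_i(j,k) > k$ as integers.

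The first step is a routine case analysis on the bit triple $(i_{(l)}, j_{(l)}, k_{(l)})$ at a single position $l$. Evaluating $(\bar i \land (j \lor k)) \lor (j \land k)$ for each of the eight patterns and deleting $(0,1,1)$, which is forbidden by the hypothesis $\bar i \land j \land k = 0$, one finds that $\mu_i(j,k)$ and $k$ agree except on two patterns: at $(0,1,0)$ one has $(\mu_i)_{(l)} = 1 > 0 = k_{(l)}$, and at $(1,0,1)$ one has $(\mu_i)_{(l)} = 0 < 1 = k_{(l)}$. In bitmask form, the positions of the first kind are exactly those in $\bar i \land j \land \bar k$, which equals $\bar i \land j$ under the hypothesis, while the positions of the second kind lie inside $i \land \bar j$.

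The second step uses $i < j$ to pinpoint the topmost disagreement. Let $\ell$ be the largest position at which $i$ and $j$ differ; then $i_{(\ell)} = 0$, $j_{(\ell)} = 1$, and $i_{(l)} = j_{(l)}$ for all $l > \ell$, so both masks $\bar i \land j$ and $i \land \bar j$ vanish above $\ell$, and hence $\mu_i(j,k)$ and $k$ agree on every bit strictly above $\ell$. At position $\ell$ itself the triple reads $(0, 1, k_{(\ell)})$; invoking $\bar i \land j \land k = 0$ a second time forces $k_{(\ell)} = 0$, so $\ell$ falls into the $(0,1,0)$ pattern, which is a disagreement of the $\mu_i > k$ kind.

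Combining the two steps, the most significant differing bit between $\mu_i(j,k)$ and $k$ is $\ell$, and at that bit $\mu_i(j,k)$ has a $1$ while $k$ has a $0$, so $\mu_i(j,k) > k$. The only real subtlety, and the point I would be most careful about, is the twofold use of $\bar i \land j \land k = 0$: first to remove the $(0,1,1)$ case from the global comparison between $\mu_i(j,k)$ and $k$, and then to pin down $k_{(\ell)} = 0$ at the decisive position $\ell$. Interestingly, the assumption $j < k$ does not enter the argument anywhere.
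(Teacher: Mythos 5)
Your proof is correct and follows essentially the same route as the paper: both arguments pivot on the most significant bit position where $i_{(l)}=0$ and $j_{(l)}=1$ (the paper's $l_i^*(j)$, your $\ell$), show that $\mu_i(j,k)$ agrees with $k$ on all higher bits, and invoke $\bar i \land j \land k = 0$ at that position to force $k_{(\ell)}=0$, giving $\mu_i(j,k)>k$. Your explicit eight-case truth table and the observation that $j<k$ is never actually needed are only minor presentational refinements of the paper's support-set ($\mathcal{S}_i$ versus $\mathcal{S}_i^\mathrm{c}$) bookkeeping.
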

\begin{proof}
    Let us split the bit positions in the support $\mathcal{S}_i$ of coset index $i$ as and the complementary set $\mathcal{S}_i^\mathrm{c}$.
    Subsequently, $l \in \mathcal S_i$ implies that ${\mu_i(j,k)_{(l)} = j_{(l)} \land k_{(l)}}$ and $l \in \mathcal{S}_i^\mathrm{c}$ implies that ${\mu_i(j,k)_{(l)} = j_{(l)} \lor k_{(l)}}$.
    Let the most significant bit index $l$ where $h_{(l)}>i_{(l)}$ be defined as 
    $l_i^*(h) = \max\left(\mathcal{S}_i^\mathrm{c} \cap \mathcal{S}_h\right)$.
    Consequently, if ${h > i}$, then $h_{(l)}=1$ for all $l \in \mathcal{S}_i \cap \mathbb{Z}_{l_i^*(h):n}$. 
    Thus, $l_i^*(k) > l_i^*(j)$ if $i < j < k$ as ${\bar i \land j \land k = 0}$. 
    We examine the most significant bit positions $\mathbb{Z}_{l_i^*(j):n}$.
    First, for $l \in \mathcal{S}_i \cap \mathbb{Z}_{l_i^*(j):n}$ we find $j_{(l)}=1$ as above and thus $\mu_i(j,k)_{(l)} = k_{(l)}$.
    Second, for ${l \in \mathcal{S}_i^\mathrm{c} \cap \mathbb{Z}_{l_i^*(j):n}}$ we find $j_{(l)}=0$ since ${l > l_i^*(j)}$ and thus $\mu_i(j,k)_{(l)} = k_{(l)}$.
    Finally, for $l = l_i^*(j)$ follows $\mu_i(j,k)_{(l)} > k_{(l)}$ and therefore $\mu_i(j,k) > k > j$.
\end{proof}

\begin{proposition}\label{proposition: Qiwmin tree}
    The message set $\mathcal{Q}_{i,\wmin}$ that forms the codewords with weight $\wmin$ of a universal polar coset $\mathcal P_i$ can be represented as a \ac{PDBT} with sibling levels ${\mathcal{L}_2 = \mathcal{K}_i}$.
\end{proposition}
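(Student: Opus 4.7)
The plan is to apply Lemma~\ref{proposition:dichotomous binary tree} to the set $\mathcal{Q}_{i,\wmin}$. I would have to verify (a) that every $\boldsymbol{u}\in\mathcal{Q}_{i,\wmin}$ shares the same zeroth bit, (b) that the implicitly defined set $\mathcal{L}^*$ of first-difference positions equals $\mathcal{K}_i$, and (c) that $|\mathcal{Q}_{i,\wmin}|=2^{|\mathcal{L}^*|}$. Point (a) is immediate from the description $\mathcal{U}_i(\mathcal{J})=\{i\}\cup\mathcal{J}\cup\mathcal{M}_i(\mathcal{J})$ in (\ref{eq:Ui set}) together with $\mathcal{J}\subseteq\mathcal{K}_i\subseteq\mathbb{Z}_{i+1:N}$ and $\mathcal{M}_i(\mathcal{J})\subseteq\mathcal{K}_i^\mathrm{c}\subseteq\mathbb{Z}_{i+1:N}$: all messages agree on positions $0,\dots,i$. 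Point (c) drops out once (b) is established, because (\ref{eq:num wmin Qiwmin}) already gives $|\mathcal{Q}_{i,\wmin}|=2^{|\mathcal{K}_i|}$.

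The core of the argument is (b), which I would prove as a double inclusion. For $\mathcal{K}_i\subseteq\mathcal{L}^*$, given $k\in\mathcal{K}_i$ the pair $\mathcal{J}=\emptyset$ and $\mathcal{J}'=\{k\}$ produces supports $\{i\}$ and $\{i,k\}$ (since $\mathcal{M}_i(\emptyset)=\mathcal{M}_i(\{k\})=\emptyset$ by (\ref{eq:Mset})--(\ref{eq:add to Mset})), so their first difference sits at position $k$. For the reverse inclusion $\mathcal{L}^*\subseteq\mathcal{K}_i$, I would take two distinct $\mathcal{J},\mathcal{J}'\subseteq\mathcal{K}_i$, set $k^*=\min(\mathcal{J}\setxor\mathcal{J}')$, and show that $\boldsymbol{u}(i,\mathcal{J})$ and $\boldsymbol{u}(i,\mathcal{J}')$ agree on positions $<k^*$ and disagree at position $k^*$, thereby identifying the first-difference position with an element of $\mathcal{K}_i$.

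The main obstacle is that $\mathcal{M}_i(\mathcal{J})$ and $\mathcal{M}_i(\mathcal{J}')$ depend on all elements of $\mathcal{J}\setxor\mathcal{J}'$, including those above $k^*$, so their agreement below $k^*$ needs justification. The plan is to leverage Lemma~\ref{lemma:M construction}: extend the common prefix $\tilde{\mathcal{J}}=\mathcal{J}\cap\mathbb{Z}_{k^*}=\mathcal{J}'\cap\mathbb{Z}_{k^*}$ by the remaining elements of $\mathcal{J}$ (respectively $\mathcal{J}'$) in increasing order via (\ref{eq:Mset}). Each such step adds balancing rows $\mu_i(j,k')$ for some $j\geq k^*$ and $k'\in\tilde{\mathcal{J}}\cup\mathcal{M}_i(\tilde{\mathcal{J}})$; applying Lemma~\ref{lemma:M construction} to $\max(j,k')$ (together with the symmetry $\mu_i(j,k')=\mu_i(k',j)$ to cover the subcase $k'<j$) forces $\mu_i(j,k')>k^*$, so the added rows never touch positions $\leq k^*$. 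Hence $\mathcal{M}_i(\mathcal{J})\cap\mathbb{Z}_{k^*+1}=\mathcal{M}_i(\tilde{\mathcal{J}})\cap\mathbb{Z}_{k^*+1}=\mathcal{M}_i(\mathcal{J}')\cap\mathbb{Z}_{k^*+1}$, and together with $k^*\in\mathcal{K}_i\Rightarrow k^*\notin\mathcal{M}_i(\cdot)$ and the defining choice of $k^*$, this yields agreement on positions $<k^*$ and disagreement at $k^*$. Lemma~\ref{proposition:dichotomous binary tree} then identifies $\mathcal{Q}_{i,\wmin}$ as a PDBT with sibling levels $\mathcal{L}_2=\mathcal{K}_i$.
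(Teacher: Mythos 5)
Your proposal is correct and takes essentially the same approach as the paper: it uses Lemma~\ref{lemma:M construction} (invoking the symmetry of $\mu_i$) to show that all newly introduced balancing rows lie beyond the branching core-row index, so messages agree before a position in $\mathcal{K}_i$ and differ exactly there, and then combines the cardinality~(\ref{eq:num wmin Qiwmin}) with Lemma~\ref{proposition:dichotomous binary tree}. Your double-inclusion argument for arbitrary pairs $\mathcal{J},\mathcal{J}'$ is simply a more explicit rendering of the paper's one-step argument for $\mathcal{J}\rightarrow\mathcal{J}\cup\{j\}$.
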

\begin{proof}
    Let $\mathcal{J} \cup \{j\} \subseteq \mathcal{K}_i$ be a set of core rows obtained from $\mathcal{J}$ and $j$.
    From Lemma~\ref{lemma:M construction} and (\ref{eq:add to Mset}) directly follows that ${\tilde{\mathcal M}_i(j,\mathcal J) \cap \mathbb Z_j = \emptyset}$, which implies with (\ref{eq:Mset}) that 
    \begin{equation*}
        {\mathcal{M}_i(\mathcal{J}) \cap \mathbb{Z}_j} = {\mathcal{M}_i(\mathcal{J} \cup \{j\}) \cap \mathbb{Z}_j}.
    \end{equation*}
    In other words, all indices of the additional balancing rows $\tilde{\mathcal{M}}_i(j,\mathcal{J})$ for $\mathcal {M}_i(\mathcal{J} \cup \{j\})$ are greater than $j$.
    Correspondingly, by using (\ref{eq:Ui set}) we find  for the message vectors 
    \begin{equation*}
        \boldsymbol{u}_{\mathbb{Z}_j}(i,\mathcal{J}) = \boldsymbol{u}_{\mathbb{Z}_j}(i,\mathcal{J} \cup \{j\}).
    \end{equation*}
    Further, $u_j(i,\mathcal{J}) < u_j(i,\mathcal{J} \cup \{j\})$, with which follows that $\mathcal{L}_2=\mathcal{K}_i$.
    From (\ref{eq:num wmin Qiwmin}) we know that $\left|\mathcal Q_{i,\wmin}\right| = 2^{\left|\mathcal K_i\right|}$.
    Consequently, following Lemma~\ref{proposition:dichotomous binary tree}, $\mathcal{Q}_{i,\wmin}$ can be represented as a \ac{PDBT}.
\end{proof}

Building on Proposition~\ref{proposition: Qiwmin tree}, we can now determine how to efficiently compute the message tree corresponding to $\mathcal{Q}_{i,\wmin}$.
\begin{proposition}\label{proposition:update message}
    Let $\mathcal{J} \cup \{j\} \subseteq \mathcal{K}_i$ and $j > \max(\mathcal{J})$. Then a message ${\boldsymbol{u}(i,\mathcal{J})} \in \mathcal{Q}_{i,\wmin}$ can be computed as 
    \begin{equation}
        \begin{split}\label{eq:Ui set successive}
        \operatorname{supp}\left(\boldsymbol{u}(i,\mathcal{J} \cup \{j\})\right) &= \mathcal{U}_i(\mathcal{J} \cup \{j\})\\
        &= \left(\mathcal{U}_i(\mathcal{J}) \setxor \tilde{\mathcal{M}}_i(j,\mathcal{J}) \right) \cup \{j\},
        \end{split}
    \end{equation}       
    where the additional balancing rows are found as
    \begin{equation*}\label{eq:add to Mset successive}
        \tilde{\mathcal{M}}_i(j,\mathcal {J}) = \underset{{\substack{k \in \mathcal{U}_i(\mathcal{J}) \cap \mathbb{Z}_{i+1:j},\\\bar{i} \land j \land k = 0}}}{\bigsetxor} \left\{\mu_i(j,k)\right\}.  
    \end{equation*}
\end{proposition}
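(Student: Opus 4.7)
My plan is to decompose the statement into two tasks. First, derive the set identity $\mathcal{U}_i(\mathcal{J} \cup \{j\}) = (\mathcal{U}_i(\mathcal{J}) \setxor \tilde{\mathcal{M}}_i(j, \mathcal{J})) \cup \{j\}$ assuming the original definition (\ref{eq:add to Mset}) of $\tilde{\mathcal{M}}_i(j, \mathcal{J})$. Second, show that the restricted iteration range $k \in \mathcal{U}_i(\mathcal{J}) \cap \mathbb{Z}_{i+1:j}$ in the new formula produces exactly the same $\tilde{\mathcal{M}}_i(j, \mathcal{J})$ under the hypothesis $j > \max(\mathcal{J})$. Together these yield the claim.

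For the first task, I would expand $\mathcal{U}_i(\mathcal{J} \cup \{j\}) = \{i\} \cup \mathcal{J} \cup \{j\} \cup (\mathcal{M}_i(\mathcal{J}) \setxor \tilde{\mathcal{M}}_i(j, \mathcal{J}))$ via (\ref{eq:Ui set}) and (\ref{eq:Mset}), and then factor the symmetric difference out of the larger union. This rearrangement is valid provided three disjointness conditions: (a) $\tilde{\mathcal{M}}_i(j, \mathcal{J}) \cap (\{i\} \cup \mathcal{J}) = \emptyset$, which follows from the fact $\tilde{\mathcal{M}}_i(j, \mathcal{J}) \cap \mathbb{Z}_j = \emptyset$ already established in the proof of Proposition~\ref{proposition: Qiwmin tree}, combined with $\{i\} \cup \mathcal{J} \subseteq \mathbb{Z}_j$; (b) $j \notin \mathcal{U}_i(\mathcal{J})$, which uses $j > \max(\mathcal{J})$, $j > i$, and the structural fact that $j \in \mathcal{K}_i$ while $\mathcal{M}_i(\mathcal{J}) \subseteq \mathcal{K}_i^{\mathrm{c}}$; and (c) $j \notin \tilde{\mathcal{M}}_i(j, \mathcal{J})$, which is an immediate consequence of Lemma~\ref{lemma:M construction}.

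The second task is the main obstacle. Since $i \notin \mathbb{Z}_{i+1:j}$ and $\mathcal{J} \subseteq \mathbb{Z}_{i+1:j}$, the two iteration ranges can only differ in elements of $\mathcal{M}_i(\mathcal{J})$ with index $\geq j$, so it suffices to show that no such $k$ satisfies the guard $\bar i \land j \land k = 0$. My approach would be a bit-level induction on the recursive $\mu_i$-construction of $\mathcal{M}_i(\mathcal{J})$: each $j' \in \mathcal{K}_i$ has a unique ``1'' in $\mathcal{S}_i^{\mathrm{c}}$ at some position $p_{j'}$ and matches $i$ on all higher positions; this property propagates through $\mu_i$, so every $k \in \mathcal{M}_i(\mathcal{J})$ has $\mathcal{S}_i^{\mathrm{c}}$-support contained in $\{p_{j'} : j' \in \mathcal{J}\}$ and agrees with $i$ on all positions above its highest $\mathcal{S}_i^{\mathrm{c}}$-bit. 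Because within $\mathcal{K}_i$ the integer order is consistent with the order of $\mathcal{S}_i^{\mathrm{c}}$-bit positions, $j > \max(\mathcal{J})$ forces $p_j \geq \max\{p_{j'} : j' \in \mathcal{J}\}$; the guard then excludes $p_j$ from $k$'s $\mathcal{S}_i^{\mathrm{c}}$-support, so $k$'s highest $\mathcal{S}_i^{\mathrm{c}}$-bit lies strictly below $p_j$, and comparing bits at and above $p_j$ yields $k < j$, contradicting $k \geq j$. This closes the gap and, combined with the first task, proves the claim.
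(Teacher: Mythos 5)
Your proposal is correct and follows essentially the same route as the paper: its proof likewise reduces everything to showing that any $k \in \mathcal{J} \cup \mathcal{M}_i(\mathcal{J})$ passing the guard $\bar i \land j \land k = 0$ must satisfy $k < j$, by tracking the most significant bit position where a row exceeds $i$ (your $p_{j'}$ is the paper's $l_i^*(j')$) and its preservation under the $\mu_i$-construction of the balancing rows. Your explicit induction with the support-containment invariant and the spelled-out disjointness bookkeeping for the symmetric-difference identity are simply more detailed renderings of steps the paper states tersely.
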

\begin{proof}
    Let ${x<y<z}$, then
    with $l_x^*(y) = \max\left(\mathcal{S}_x^\mathrm{c} \cap \mathcal{S}_y\right)$
    follows 
    $h_{(l)}=1$ for all $l \in \mathcal{S}_x \cap \mathbb{Z}_{l_x^*(y):n}$ and thus $l_x^*(z) \geq l_x^*(y)$.
    Consequently, for $j > \max\left(\mathcal J\right)$ follows $l_i^*(j) \geq l_i^*(\max(\mathcal J))$.
    Let $i < \min\{j,k\}$, then we find $l_i^*\left(\mu_i(j,k)\right) = l_i^*(\max\{j,k\})$.
    With (\ref{eq:Mset}) and (\ref{eq:add to Mset}) follows $l_i^*(j) \geq l_i^*(\max(\mathcal M_i(\mathcal J)))$. 
    Let ${\bar{i} \land j \land k = 0}$, then $l_i^*(j) \geq l_i^*(k)$ implies $j>k>i$.
    Thus, if $j > \max\left(\mathcal J\right)$, only rows $k \in \mathcal{U}_i(\mathcal{J}) \cap \mathbb{Z}_{i+1:j}$ have to be considered for the computation of $\tilde{\mathcal{M}}_i(j,\mathcal {J})$ with (\ref{eq:add to Mset}).
    Subsequently, (\ref{eq:Ui set successive}) follows from (\ref{eq:Ui set}) and (\ref{eq:Mset}).
\end{proof}

Following Proposition~\ref{proposition: Qiwmin tree} and Proposition~\ref{proposition:update message}, the messages $\boldsymbol{u} \in \mathcal{Q}_{i,\wmin}$ can be found without redundant operations from the tree.
Although both sets $\Qi$ and $\mathcal{Q}_{i,\wmin}$ can be represented as \acp{PDBT}, the message calculation differs.
For $\Qi$, the value of an einzelchild node is computed at the corresponding einzelchild level.
Thus, we only obtain a message $\boldsymbol{u} \in \Qi$ after we have followed a complete path through the tree.
In contrast, for $\mathcal{Q}_{i,\wmin}$, the values of the einzelchild nodes are determined at the sibling levels.
Hence, following Proposition~\ref{proposition:update message}, we directly find another message $\boldsymbol{u} \in \mathcal{Q}_{i,\wmin}$ if we update the set of core rows.

Note that the message tree of $\Qi$ could be computed similarly to the tree corresponding to $\mathcal{Q}_{i,\wmin}$.
But as it will turn out, the proposed method is empirically more efficient for computing the intersection tree of $\Qi$ and $\mathcal{Q}_{i,\wmin}$.

\subsection{Tree Intersection}
With the knowledge of the structure of the trees corresponding to the messages sets $\Qi$ and $\mathcal{Q}_{i,\wmin}$, we can finally deal with the computation of the intersection tree and subsequently the calculation of (\ref{eq:intersection cardinality}).
First, we determine the type of the intersection tree, using the fact that both message trees are \acp{PDBT}.

\begin{proposition}\label{proposition:tree intersection}
    The intersection tree $\mathcal{Z}$ of two \acp{DBT} $\mathcal{X}$, $\mathcal{Y}$ of height $N$ 
    with einzelchild levels $\mathcal{L}_{1,\mathcal{X}}$, $\mathcal{L}_{1,\mathcal{Y}}$,
    and sibling levels $\mathcal{L}_{2,\mathcal{X}}$, $\mathcal{L}_{2,\mathcal{Y}}$
    is again a \ac{DBT} with einzelchild levels
    \begin{equation*}
        \mathcal{L}_{1,\mathcal{Z}} = \left(\mathcal{L}_{1,\mathcal{X}} \cap \mathcal{L}_{1,\mathcal{Y}}\right) \cup \left(\mathcal{L}_{1,\mathcal{X}} \cap \mathcal{L}_{2,\mathcal{Y}}\right) \cup \left(\mathcal{L}_{2,\mathcal{X}} \cap \mathcal{L}_{1,\mathcal{Y}}\right)
    \end{equation*}
and sibling levels $\mathcal{L}_{2,\mathcal{Z}} = \mathcal{L}_{2,\mathcal{X}} \cap \mathcal{L}_{2,\mathcal{Y}}$.
\end{proposition}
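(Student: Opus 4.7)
The plan is to classify each level $l$ of $\mathcal{Z}$ separately by reading off the types of the corresponding levels in $\mathcal{X}$ and $\mathcal{Y}$. The starting observation is that a node at depth $l$ lies in $\mathcal{Z}$ exactly when it lies in both $\mathcal{X}$ and $\mathcal{Y}$; equivalently, a prefix $p$ of length $l-1$ appears as an internal node of $\mathcal{Z}$ iff it appears in both trees, and the children of $p$ in $\mathcal{Z}$ are precisely those that are simultaneously children of $p$ in $\mathcal{X}$ and in $\mathcal{Y}$. This reduces the global statement to a purely local check at every shared parent.

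With this observation I would split into the four cases obtained from whether $l$ is einzelchild or sibling in each of the two trees. If $l \in \mathcal{L}_{2,\mathcal{X}} \cap \mathcal{L}_{2,\mathcal{Y}}$, every shared parent $p$ has both a $0$-child and a $1$-child in each tree, and hence both children survive in $\mathcal{Z}$; level $l$ is therefore a sibling level of $\mathcal{Z}$, giving $\mathcal{L}_{2,\mathcal{Z}} = \mathcal{L}_{2,\mathcal{X}} \cap \mathcal{L}_{2,\mathcal{Y}}$. In any of the remaining three cases at least one of the two trees forces $p$ to have a single child with a prescribed value, so $p$ has at most one child in $\mathcal{Z}$ and $l$ is an einzelchild level of $\mathcal{Z}$. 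Taking the union of these three sub-cases recovers the claimed formula for $\mathcal{L}_{1,\mathcal{Z}}$.

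The main subtlety is the ``both einzelchild'' case: if the bit forced by $\mathcal{X}$ at some prefix $p$ disagrees with the one forced by $\mathcal{Y}$, then $p$ has no continuation in $\mathcal{Z}$ and becomes a leaf strictly above level $N$. Consequently the intersection need not inherit the \emph{perfect} property even when $\mathcal{X}$ and $\mathcal{Y}$ are both PDBTs, which is why the proposition only claims a DBT. This does not break the classification, however, because Definition~\ref{def:dichotomous binary tree} only requires that every \emph{existing} node at an einzelchild level be an only child and that every existing node at a sibling level have a sibling; both properties are delivered uniformly by the local argument above, regardless of how many parents get pruned away. The remaining bookkeeping is a simple set-theoretic decomposition: the set of levels partitions as $(\mathcal{L}_{1,\mathcal{X}} \cup \mathcal{L}_{2,\mathcal{X}}) \cap (\mathcal{L}_{1,\mathcal{Y}} \cup \mathcal{L}_{2,\mathcal{Y}})$ into the four disjoint intersections, three of which are einzelchild in $\mathcal{Z}$ and one sibling, matching the formulas in the proposition.
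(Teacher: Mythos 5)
Your proof is correct and follows essentially the same route as the paper's: a level-by-level local analysis of the four cases arising from whether each level is einzelchild or sibling in $\mathcal{X}$ and $\mathcal{Y}$, with cases involving at least one einzelchild level yielding $\mathcal{L}_{1,\mathcal{Z}}$ and the doubly-sibling case yielding $\mathcal{L}_{2,\mathcal{Z}}$. Your explicit remark that conflicting forced bits in the doubly-einzelchild case produce leaves above level $N$ (so $\mathcal{Z}$ need not be perfect) is exactly what the paper handles with its ``otherwise, it is a leaf node'' clause.
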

\begin{proof}
    If the root node of $\mathcal{X}$ and $\mathcal{Y}$ is equal, the intersection tree $\mathcal{Z}$ has the same root node.
    Otherwise, $\mathcal{Z}$ is empty.
    From the root node onward, $\mathcal{Z}$ is found successively level by level.
    A node in $\mathcal{Z}$ has a left and/or right child node if and only if both $\mathcal{X}$ and $\mathcal{Y}$ contain the corresponding node. 
    As $\mathcal{X}$ and $\mathcal{Y}$ are both dichotomous, there are four cases of levels:
    \begin{enumerate}
        \item $l \in \mathcal{L}_{1,\mathcal{X}} \cap \mathcal{L}_{1,\mathcal{Y}}$:
        A node in $\mathcal{Z}$ at level $l-1$ has the same child node as the corresponding nodes in $\mathcal{X}$ and in $\mathcal{Y}$ if their child nodes coincide. Otherwise, it is a leaf node.    
        \item $l \in \mathcal{L}_{1,\mathcal{X}} \cap \mathcal{L}_{2,\mathcal{Y}}$:
        A node in $\mathcal{Z}$ at level $l-1$ has the same child node as the corresponding node in $\mathcal{X}$.
        \item $l \in \mathcal{L}_{2,\mathcal{X}} \cap \mathcal{L}_{1,\mathcal{Y}}$:
        A node in $\mathcal{Z}$ at level $l-1$ has the same child node as the corresponding node in $\mathcal{Y}$.
        \item $l \in \mathcal{L}_{2,\mathcal{X}} \cap \mathcal{L}_{2,\mathcal{Y}}$:
        All nodes in $\mathcal{Z}$ are sibling nodes.
    \end{enumerate}
    Subsequently, cases 1) to 3) correspond to the einzelchild levels $\mathcal{L}_{1,\mathcal{Z}}$ of $\mathcal{Z}$ and case 4) to the sibling levels $\mathcal{L}_{2,\mathcal{Z}}$.
    Therefore, $\mathcal{Z}$ is dichotomous.
\end{proof}

From Proposition~\ref{proposition:tree intersection} follows that the intersection tree of the \acp{PDBT} corresponding to $\Qi$ and $\mathcal{Q}_{i,\wmin}$ is a \ac{DBT} with sibling levels $\mathcal{L}_2 = \mathcal{I} \cap \mathcal{K}_i$. 
Subsequently, to find a message $\boldsymbol{u} \in \Qiwmin$, 
the bits at levels $\mathcal{I} \cap \mathcal{K}_i$ can be chosen freely.
The bits at levels $\mathcal{F} \cap \mathcal{K}_i$ are determined by the pre-transformation.
For ``1'' bits at these levels $\mathcal{K}_i$, $\boldsymbol{u}$ is updated as in Proposition~\ref{proposition:update message}, whereby the nodes of levels $\mathcal{K}_i^\mathrm{c}$ are predetermined. 
The corresponding bits at levels $\mathcal{I} \cap \mathcal{K}_i^\mathrm{c}$ can be chosen accordingly.
For bits $\mathcal{F} \cap \mathcal{K}_i^\mathrm{c}$, both trees of  $\Qi$ and $\mathcal{Q}_{i,\wmin}$ have einzelchild levels.
Consequently, the current message $\boldsymbol{u}$ and the pre-transformation have to match.
If not, then we found a leaf node of the intersection tree and the message path does not form a $\wmin$-weight codeword of the code.

Conversely, for level 
$f_i^*(\mathcal{I}) = \max\left(\{i\} \cup \left(\mathcal{K}_i^\mathrm{c} \setminus \mathcal{I} \right)\right)$,
the subsequent sub-tree is a \ac{PDBT} with sibling levels $\mathcal{L}_2={\mathcal{I} \cap \mathcal{K}_i^\circ(\mathcal{I})}$, $\smash{{\mathcal{K}_i^\circ(\mathcal{I}) = \mathcal{K}_i \setminus \mathbb{Z}_{f_i^*(\mathcal{I})}}}$, 
and we can write the enumeration as
\begin{equation}\label{eq:Awmin subtree simplification}
    \left|\Qiwmin\right| = \left|\mathcal{Q}_{i,\wmin}(\mathcal{I} \setminus \mathcal{K}_i^\circ(\mathcal{I}),\boldsymbol{T})\right| \cdot 2^{\left|\mathcal{I} \cap \mathcal{K}_i^\circ(\mathcal{I})\right|}.
\end{equation}
Following this, Algorithm~\ref{alg:error_coeff} recursively traverses the intersection tree in \emph{reverse pre-order} and counts the number of $\wmin$-weight codewords\footnote{Implementation is available: \url{https://github.com/andreaszunker/PTPC}}.

\begin{algorithm}[t]
    \small%
    \SetAlgoLined\LinesNumbered
    \SetKwInOut{Input}{Input}\SetKwInOut{Output}{Output}
    \Input{Rate-profile $\mathcal I$, pre-transformation matrix $\boldsymbol T$}
    \Output{Number of $\wmin$-weight codewords $A_{\wmin}$}
    $\boldsymbol{T}_\mathcal{I} \gets \operatorname{RREF}(\boldsymbol{T}_\mathcal{I}),\;\boldsymbol{T}_\mathcal{F} \gets \boldsymbol{0}$\;
    \Return $\sum_{i \in \mathcal I_{\wmin}} \operatorname{enumerate\_subtree}(i,i,\boldsymbol{0},\mathcal {I},\boldsymbol{T}) \cdot 2^{|\mathcal{I} \cap \mathcal{K}_i^\circ(\mathcal{I})|}$\;
    \algrule[.5pt]
    \Fn{$\operatorname{enumerate\_subtree}(i,j,\boldsymbol u,\mathcal {I},\boldsymbol{T})$}{
    $\Awmin \gets 0$\;
    	$\boldsymbol{u} \gets \operatorname{update\_message}(i,j,\operatorname{copy}(\boldsymbol{u}))$\;
        \For{$k \gets j+1$ \KwTo $f_i^*(\mathcal{I})$}{
            \uIf{$k \in \mathcal{I} \cap \mathcal{K}_i$}{
                $\Awmin \gets \Awmin + \operatorname{enumerate\_subtree}(i,k,\boldsymbol u,\mathcal {I},\boldsymbol{T})$\;
            }
            \ElseIf{$k \in \mathcal{F}$  \rm{\textbf{and}} $u_k \neq \bigoplus_{h=i}^{k-1} u_h \cdot t_{h,k}$}{
                \uIf{$k \in \mathcal{K}_i$}{
                    $\boldsymbol{u} \gets \operatorname{update\_message}(i,k,\boldsymbol{u})$\;
                }
                \lElse{\Return $\Awmin$}
            }
        }
        \Return $\Awmin$ + 1\;
    }
    \algrule[.5pt]
    \Fn{$\operatorname{update\_message}(i,j,\boldsymbol{u})$}{
        \For{$k \gets i+1$ \KwTo $j-1$\label{algline:i+1 to j-1}}{
            \If{$u_k = 1$ \rm{\textbf{and}} $\bar i \land j \land k = 0$}{
                $u_{\mu_i(j,k)} \gets u_{\mu_i(j,k)} \oplus 1$\;
            }
        }
        $u_j \gets 1$\;
        \Return $\boldsymbol{u}$\;
    }
	\caption{\footnotesize
    Enumeration of the $\wmin$-weight codewords
    }
    \label{alg:error_coeff}
\end{algorithm}
To illustrate the structure of the message trees corresponding to $\Qi$ and $\mathcal{Q}_{i,\wmin}$ as well as their intersection, we give the following example.
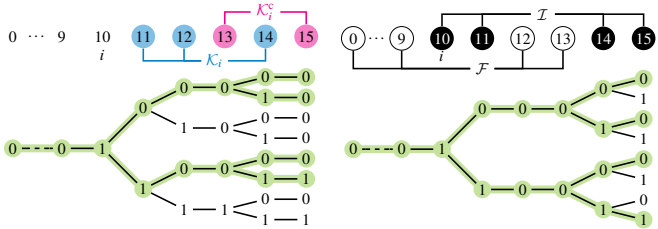
\begin{figure}[h]
	\centering
    \begin{subfigure}[b]{0.48\textwidth}
                \resizebox{\columnwidth}{!}%
                {\begin{tikzpicture}[yscale=0.45,xscale=0.9]
\tikzstyle{basic}=[circle, draw=none,fill=white,inner sep=1pt]
\tikzstyle{hlnode}=[circle,draw=none,fill=apfelgruen!50,inner sep=1pt]
\tikzstyle{head}=[circle,draw=none,fill=none,inner sep=1pt]
\tikzstyle{line} = [line width = 1pt]
\tikzstyle{highlight} = [line width = 5pt, apfelgruen!50]

\node[head] at (0.8,0) {$0$};
\node[head] at (1.4,0) {$\dots$};
\node[head] at (2,0) {$9$};
\node[head] at (3,0) {$10$};
\node[head, fill=mittelblau!50] (ki1) at (4,0) {$11$};
\node[head, fill=mittelblau!50] (ki2) at (5,0) {$12$};
\node[head, fill=pink!50] (kic1) at (6,0) {$13$};
\node[head, fill=mittelblau!50] (ki3) at (7,0) {$14$};
\node[head, fill=pink!50] (kic2) at (8,0) {$15$};
\node[head] at (3,-.9) {$i$};

\node[head, mittelblau] (ki) at (5.75,-1.2) {$\mathcal{K}_i$};
\draw[line, mittelblau] (ki1) |- (ki);
\draw[line, mittelblau] (ki2) |- (ki);
\draw[line, mittelblau] (ki3) |- (ki);

\node[head, pink] (kic) at (7,1.2) {$\mathcal{K}^\mathrm{c}_i$};
\draw[line, pink] (kic1) |- (kic);
\draw[line, pink] (kic2) |- (kic);

\coordinate (r0) at (0.8,-5.5);
\coordinate (r1a) at (1.1,-5.5);
\coordinate (r1b) at (1.7,-5.5);
\coordinate (r2) at (2,-5.5);
\coordinate (r3) at (3,-5.5);

\coordinate (r4a) at (4,-3.5);
\coordinate (r4b) at (4,-7.5);

\coordinate (r5a) at (5,-2.5);
\coordinate (r5b) at (5,-4.5);
\coordinate (r5c) at (5,-6.5);
\coordinate (r5d) at (5,-8.5);

\coordinate (r6a) at (6,-2.5);
\coordinate (r6b) at (6,-4.5);
\coordinate (r6c) at (6,-6.5);
\coordinate (r6d) at (6,-8.5);

\coordinate (r7a) at (7,-2);
\coordinate (r7b) at (7,-3);
\coordinate (r7c) at (7,-4);
\coordinate (r7d) at (7,-5);

\coordinate (r7e) at (7,-6);
\coordinate (r7f) at (7,-7);
\coordinate (r7g) at (7,-8);
\coordinate (r7h) at (7,-9);

\coordinate (r8a) at (8,-2);
\coordinate (r8b) at (8,-3);
\coordinate (r8c) at (8,-4);
\coordinate (r8d) at (8,-5);

\coordinate (r8e) at (8,-6);
\coordinate (r8f) at (8,-7);
\coordinate (r8g) at (8,-8);
\coordinate (r8h) at (8,-9);

\draw[highlight] (r0) -- (r2) -- (r3);
\draw[highlight] (r3) -- (r4a) -- (r5a) -- (r6a) -- (r7a) -- (r8a);
\draw[highlight] (r3) -- (r4b) -- (r5c) -- (r6c) -- (r7e) -- (r8e);
\draw[highlight] (r6a) -- (r7b) -- (r8b);
\draw[highlight] (r6c) -- (r7f) -- (r8f);

\draw[line] (r6b) -- (r7d) -- (r8d);
\draw[line] (r6d) -- (r7h) -- (r8h);
\draw[line] (r4a) -- (r5b) -- (r6b) -- (r7c) -- (r8c);
\draw[line] (r4b) -- (r5d) -- (r6d) -- (r7g) -- (r8g);
\draw[line] (r0) -- (r1a);
\draw[line] (r1b) -- (r2) -- (r3);
\draw[line] (r3) -- (r4a) -- (r5a) -- (r6a) -- (r7a) -- (r8a);
\draw[line] (r3) -- (r4b) -- (r5c) -- (r6c) -- (r7e) -- (r8e);
\draw[line] (r6a) -- (r7b) -- (r8b);
\draw[line] (r6c) -- (r7f) -- (r8f);
\draw[line, dashed] (r1a) -- (r1b);

\node[hlnode] at (r0)  {$0$};
\node[hlnode] at (r2)  {$0$};
\node[hlnode] at (r3)  {$1$};
\node[hlnode] at (r4a) {$0$};
\node[hlnode] at (r4b) {$1$};
\node[hlnode] at (r5a) {$0$};
\node[basic] at (r5b) {$1$};
\node[hlnode] at (r5c) {$0$};
\node[basic] at (r5d) {$1$};

\node[hlnode] at (r6a) {$0$};
\node[basic] at (r6b) {$0$};
\node[hlnode] at (r6c) {$0$};
\node[basic] at (r6d) {$1$};
\node[hlnode] at (r7a) {$0$};
\node[hlnode] at (r7b) {$1$};
\node[basic] at (r7c) {$0$};
\node[basic] at (r7d) {$1$};
\node[hlnode] at (r7e) {$0$};
\node[hlnode] at (r7f) {$1$};
\node[basic] at (r7g) {$0$};
\node[basic] at (r7h) {$1$};
\node[hlnode] at (r8a) {$0$};
\node[hlnode] at (r8b) {$0$};
\node[basic] at (r8c) {$0$};
\node[basic] at (r8d) {$0$};
\node[hlnode] at (r8e) {$0$};
\node[hlnode] at (r8f) {$1$};
\node[basic] at (r8g) {$0$};
\node[basic] at (r8h) {$1$};

\end{tikzpicture}}
                \caption{ \ac{PDBT} of $\mathcal{Q}_{i=10,\wmin=4}$, branching before each bit $u_h$, $h \in \mathcal{L}_2 = \mathcal{K}_{i=4}$.\newline
                }
                \label{subfig:tree_universal}
        \end{subfigure}
        \hfill
        \begin{subfigure}[b]{0.48\textwidth}
                \resizebox{\columnwidth}{!}%
                {\begin{tikzpicture}[yscale=0.45,xscale=0.9]
\tikzstyle{basic}=[circle, draw=none,fill=white,inner sep=1pt]
\tikzstyle{hlnode}=[circle,draw=none,fill=apfelgruen!50,inner sep=1pt]
\tikzstyle{head}=[circle,draw=none,fill=none,inner sep=1pt, minimum size=15pt]
\tikzstyle{line} = [line width = 1pt]
\tikzstyle{highlight} = [line width = 5pt, apfelgruen!50]

\node[head, draw=black] (f1) at (0.8,0) {$0$};
\node[head] at (1.4,0) {$\dots$};
\node[head, draw=black] (f2) at (2,0) {$9$};
\node[head, fill=black, text=white] (i1) at (3,0) {$10$};
\node[head, fill=black, text=white] (i2) at (4,0) {$11$};
\node[head, draw=black] (f3) at (5,0) {$12$};
\node[head, draw=black] (f4) at (6,0) {$13$};
\node[head, fill=black, text=white] (i3) at (7,0) {$14$};
\node[head, fill=black, text=white] (i4) at (8,0) {$15$};
\node[head] at (3,-.9) {$i$};

\node[head] (i) at (5.5,1.2) {$\mathcal{I}$};
\draw[line] (i1) |- (i);
\draw[line] (i2) |- (i);
\draw[line] (i3) |- (i);
\draw[line] (i4) |- (i);

\node[head] (f) at (4,-1.5) {$\mathcal{F}$};
\draw[line] (f1) |- (f);
\draw[line] (f2) |- (f);
\draw[line] (f3) |- (f);
\draw[line] (f4) |- (f);

\coordinate (r0) at (0.8,-5.5);
\coordinate (r1a) at (1.1,-5.5);
\coordinate (r1b) at (1.7,-5.5);
\coordinate (r2) at (2,-5.5);
\coordinate (r3) at (3,-5.5);

\coordinate (r4a) at (4,-3.5);
\coordinate (r4b) at (4,-7.5);

\coordinate (r5a) at (5,-3.5);
\coordinate (r5b) at (5,-7.5);

\coordinate (r6a) at (6,-3.5);
\coordinate (r6b) at (6,-7.5);

\coordinate (r7a) at (7,-2.5);
\coordinate (r7b) at (7,-4.5);
\coordinate (r7c) at (7,-6.5);
\coordinate (r7d) at (7,-8.5);

\coordinate (r8a) at (8,-2);
\coordinate (r8b) at (8,-3);
\coordinate (r8c) at (8,-4);
\coordinate (r8d) at (8,-5);

\coordinate (r8e) at (8,-6);
\coordinate (r8f) at (8,-7);
\coordinate (r8g) at (8,-8);
\coordinate (r8h) at (8,-9);

\draw[highlight] (r0) -- (r2) -- (r3);
\draw[highlight] (r3) -- (r4a) -- (r5a) -- (r6a) -- (r7a) -- (r8a);
\draw[highlight] (r3) -- (r4b) -- (r5b) -- (r6b) -- (r7c) -- (r8e);
\draw[highlight] (r6a) -- (r7b) -- (r8c);
\draw[highlight] (r6b) -- (r7d) -- (r8h);

\draw[line] (r0) -- (r1a);
\draw[line] (r1b) -- (r2) -- (r3);
\draw[line] (r3) -- (r4a) -- (r5a) -- (r6a) -- (r7a) -- (r8a);
\draw[line] (r3) -- (r4b) -- (r5b) -- (r6b) -- (r7c) -- (r8e);
\draw[line] (r6a) -- (r7b) -- (r8c);
\draw[line] (r6b) -- (r7d) -- (r8h);
\draw[line] (r7a) -- (r8b);
\draw[line] (r7b) -- (r8d);
\draw[line] (r7c) -- (r8f);
\draw[line] (r7d) -- (r8g);
\draw[line, dashed] (r1a) -- (r1b);

\node[hlnode] at (r0)  {$0$};
\node[hlnode] at (r2)  {$0$};
\node[hlnode] at (r3)  {$1$};
\node[hlnode] at (r4a) {$0$};
\node[hlnode] at (r4b) {$1$};
\node[hlnode] at (r5a) {$0$};
\node[hlnode] at (r5b) {$0$};

\node[hlnode] at (r6a) {$0$};
\node[hlnode] at (r6b) {$0$};
\node[hlnode] at (r7a) {$0$};
\node[hlnode] at (r7b) {$1$};
\node[hlnode] at (r7c) {$0$};
\node[hlnode] at (r7d) {$1$};
\node[hlnode] at (r8a) {$0$};
\node[basic] at (r8b) {$1$};
\node[hlnode] at (r8c) {$0$};
\node[basic] at (r8d) {$1$};
\node[hlnode] at (r8e) {$0$};
\node[basic] at (r8f) {$1$};
\node[basic] at (r8g) {$0$};
\node[hlnode] at (r8h) {$1$};

\end{tikzpicture}}
                \caption{ \ac{PDBT} of $\mathcal{Q}_{i=10}(\mathcal{I})$, with non-decreasing rate-profile ${\mathcal{I}}$, %
                branching before each bit $u_h$, $h \in \mathcal{L}_2 = \mathcal{I} \setminus \mathbb{Z}_{i+1}$.}
                \label{subfig:tree_rate_profile}
        \end{subfigure}
        \vspace{-.2cm}
    \caption{
        Message trees of a universal polar coset (a) and a plain polar coset (b) for $N=16$ and $i=10$. The intersection of the trees (i.e., common messages $\boldsymbol{u}$) is highlighted. In this example, $\left|\mathcal{Q}_{i=10,\wmin=4}(\mathcal{I})\right|=4$.
    }
    \label{fig:weightenumtree}
\end{figure}

\begin{example}
    \normalfont Consider the case $N=16$ and $\wmin=4$. We regard the coset $\mathcal P_{i=10}$. Using (\ref{eq:ki}), we find $\mathcal{K}_{i=10}=\{11,12,14\}$ and $\mathcal{K}^\mathrm{c}_{i=10}=\{13,15\}$. Hence, there are $2^{|\mathcal{K}_{i=10}|}=8$ weight-4 codewords in this coset. The corresponding \ac{PDBT} of the messages generating these $\wmin$-weight codewords is shown in Fig.~\ref{subfig:tree_universal}.
    Now consider the rate-profile ${\mathcal{I}=\{10,11,14,15\}}$. The coset $\mathcal{Q}_{i=10}(\mathcal{I})$ forms the \ac{PDBT} depicted in Fig.~\ref{subfig:tree_rate_profile}. The intersection of both trees is $\mathcal{Q}_{i=10,\wmin=4}(\mathcal{I})$ and highlighted; it contains 4 weight-4 codewords. In this case, the leaf nodes of the intersection tree are at the level $N-1$.
\end{example}

\section{Theoretical Findings}\label{sec:findings}
If we find $\mathcal{F} \cap \mathcal{K}_i^\mathrm{c} = \emptyset$ for a coset $\Ci$ where $i \in \mathcal{I}_{\wmin}$, it follows that $f_i^*(\mathcal{I}) = i$ and thus $\mathcal{K}_i^\circ(\mathcal{I})=\mathcal{K}_i$.
Consequently, from (\ref{eq:Awmin subtree simplification}) we can derive that $\left|\Qiwmin\right| = 2^{\left|\mathcal{I} \cap \mathcal{K}_i\right|}$, meaning the number of $\wmin$-weight codewords of coset $\Ci$ is independent of $\boldsymbol{T}$, and therefore $\Awmin(\Ci)=\Awmin(\mathcal{P}_i(\mathcal{I}))$.
Subsequently, as introduced in \cite{Rowshan2023Impact}, we distinguish two types of cosets with indices $i \in \mathcal{I}_{\wmin}$, based on whether $\Awmin(\Ci)$ can be reduced by choosing a suitable pre-transformation $\boldsymbol{T}$:
\begin{equation*}
    \Ci \mathrel{\widehat{=}} 
    \begin{cases}
        \emph{``pre-transformable"} & \text{ 
        if } \mathcal{F} \cap \mathcal{K}_i^\mathrm{c} \neq \emptyset\\
        \emph{``non pre-transformable"} & 
        \text{ otherwise.}
    \end{cases}
\end{equation*}
Correspondingly, the set of indices of pre-transformable cosets is $\mathcal{I}^*_{\wmin} = \left\{i \in \mathcal I_{\wmin} \,\middle|\, \mathcal{F} \cap \mathcal{K}_i^\mathrm{c} \neq \emptyset\right\}$ and of non pre-transformable cosets is $\mathcal{I}^\circ_{\wmin} = \mathcal{I}_{\wmin} \setminus \mathcal{I}^*_{\wmin}$.
Using this classification and the previous findings, we can characterize a \ac{PTPC} $\C$ based on the chosen rate-profile $\mathcal{I}$.

Firstly, we consider the minimum distance $\dmin$ of a  \ac{PTPC} $\C$ with decreasing and with $\RM(r,n)$ rate-profile $\mathcal{I}$.
\begin{theorem}\label{thm:dmin=wmin}
    A pre-transformed polar code $\C$ with a decreasing rate-profile $\mathcal{I}$ has the same minimum distance as the corresponding plain polar code $\mathcal P(\mathcal{I})$.
    \label{thm:dmin}
\end{theorem}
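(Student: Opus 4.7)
The plan is to complement the lower bound $\dmin(\C) \geq \wmin$ from~(\ref{eq:dmin lower bound}) with a matching upper bound by exhibiting a single $\wmin$-weight codeword in $\C$ that survives every admissible pre-transformation $\boldsymbol{T}$. The key lever is the observation made at the start of \autoref{sec:findings}: whenever $i \in \mathcal{I}_{\wmin}$ satisfies $\mathcal{F} \cap \mathcal{K}_i^\mathrm{c} = \emptyset$, the coset $\Ci$ is non pre-transformable, so $\Awmin(\Ci) = \Awmin(\mathcal{P}_i(\mathcal{I}))$ irrespective of $\boldsymbol{T}$; and this count is at least $1$ because the plain coset leader $\boldsymbol{g}_i$ already has weight $\w(\boldsymbol{g}_i) = \wmin$. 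It therefore suffices to exhibit a single $i^* \in \mathcal{I}_{\wmin}$ with $\mathcal{K}_{i^*}^\mathrm{c} \subseteq \mathcal{I}$.

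My candidate is the top-loaded index $i^* = \sum_{l=n-w}^{n-1} 2^l$ with $w = \log_2 \wmin$, whose binary representation has $1$s in exactly the $w$ most significant positions. First I verify $i^* \in \mathcal{I}_{\wmin}$: by definition of $\wmin$, some $i_0 \in \mathcal{I}$ has $\w(i_0) = w$, and a sequence of left swaps pushes its $w$ ones into the top positions, yielding $i_0 \preccurlyeq i^*$. The closure property of a decreasing rate-profile then gives $i^* \in \mathcal{I}$, and $\w(i^*) = w$ places it in $\mathcal{I}_{\wmin}$.

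Next I show $\mathcal{K}_{i^*}^\mathrm{c} \subseteq \mathcal{I}$. For any $j \in \mathcal{K}_{i^*}^\mathrm{c}$ one has $j > i^*$ by~(\ref{eq:ki}). Since $i^*$ is the smallest integer whose $w$ top bits are all $1$---zeroing any bit at position $l^* \in \{n-w,\dots,n-1\}$ subtracts $2^{l^*}$, which strictly exceeds what all lower $n-w$ bits can contribute---every $j \geq i^*$ must also have $j_{(l)} = 1$ for all $l \in \{n-w,\dots,n-1\}$. Hence $i^* \land j = i^*$, so $i^* \preccurlyeq j$ by binary domination, and the closure of $\mathcal{I}$ under $\preccurlyeq$ delivers $j \in \mathcal{I}$. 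Combined with the first step, $\Ci_{i^*}$ is non pre-transformable for every admissible $\boldsymbol{T}$, so $\dmin(\C) \leq \wmin$, and equality with $\dmin(\mathcal{P}(\mathcal{I})) = \wmin$ follows.

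The main difficulty lies in the choice of $i^*$: a naive candidate in $\mathcal{I}_{\wmin}$ can easily produce some $j \in \mathcal{K}_i^\mathrm{c}$ that is not reachable from any element of $\mathcal{I}$ via $\preccurlyeq$, breaking the inclusion $\mathcal{K}_i^\mathrm{c} \subseteq \mathcal{I}$. The top-loaded choice collapses the general reachability question to trivial binary domination, because every $j > i^*$ already contains all $1$-bits of $i^*$; this is also precisely the point where the decreasing-rate-profile assumption is used.
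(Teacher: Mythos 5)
Your proposal is correct and takes essentially the same route as the paper: the paper's proof also uses the top-loaded witness index $i = \max(\mathcal{I}_{\wmin}) = N - \nicefrac{N}{\wmin}$ (identical to your $i^*$), shows $i \in \mathcal{I}$ via left swaps and $\mathbb{Z}_{i:N} \subseteq \mathcal{I}$ via binary domination, and then combines the resulting $\wmin$-weight codeword with the lower bound (\ref{eq:dmin lower bound}). The only difference is the concluding step: the paper observes directly that, with no frozen indices beyond $i$, the row $\boldsymbol{g}_i$ itself lies in $\C$, whereas you route through the non-pre-transformable-coset count $\Awmin\left(\Ci\right) = 2^{\left|\mathcal{I} \cap \mathcal{K}_i\right|} \geq 1$ derived from (\ref{eq:Awmin subtree simplification}) — slightly heavier machinery, but equally valid since that fact is established before the theorem.
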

\begin{proof}
    As $\mathcal{I}$ is decreasing, from the \textit{left swap} rule follows that $\mathcal{P}(\mathcal {I})$ includes the most reliable $\wmin$-weight row $\boldsymbol g_{i}$ with index ${i = \max\left(\mathcal I_{\wmin}\right) = N - \nicefrac{N}{\wmin}}$
    as $i \succcurlyeq j \in \mathcal I_{\wmin}$.
    Subsequently, we find $\mathbb{Z}_{i:N}=\left\{j \in \mathbb Z_N \,\middle|\, i \preccurlyeq j \right\} \subseteq \mathcal{I}$ by
    following the \textit{binary domination} rule.
    Thus, there are no frozen rows $\boldsymbol{g}_f$ with $f > i$ to add to $\boldsymbol{g}_i$, implying that $\boldsymbol g_{i} \in \mathcal P(\mathcal I,\boldsymbol T)$.
    With the lower bound~(\ref{eq:dmin lower bound}), it follows that $\dmin(\C) = \dmin(\mathcal{P}(\mathcal{I})) = \wmin$.
\end{proof}
\vspace{0.01cm}

\begin{corollary}\label{cor:dmin rm}
    A pre-transformed polar code $\C$ with an $\RM(r,n)$ rate-profile $\mathcal{I}$ has minimum distance $\dmin = 2^{n-r}$. 
\end{corollary}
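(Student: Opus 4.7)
The plan is to obtain the corollary as an immediate consequence of Theorem~\ref{thm:dmin=wmin}, together with the well-known fact (recalled in the preliminaries) that the $\RM(r,n)$ code is exactly the plain polar code $\mathcal{P}(\mathcal{I})$ with rate-profile $\mathcal{I} = \{i \in \mathbb{Z}_N : \w(i) \geq n-r\}$ and has minimum distance $2^{n-r}$.

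The one thing that needs verifying is that this particular $\mathcal{I}$ is indeed decreasing under ``$\preccurlyeq$'', so that Theorem~\ref{thm:dmin=wmin} applies. I would argue this by inspecting the two generating relations of the partial order. Under \emph{binary domination}, $i \preccurlyeq j$ implies $i_{(l)} \leq j_{(l)}$ bitwise, hence $\w(i) \leq \w(j)$. Under a \emph{left swap}, one bit moves from position $l$ to position $l+1$ while all other bits remain, so $\w(i) = \w(j)$. Since every relation $i \preccurlyeq j$ is obtained from these two by transitivity, we always have $\w(j) \geq \w(i)$. Therefore, $i \in \mathcal{I}$ and $i \preccurlyeq j$ imply $\w(j) \geq \w(i) \geq n-r$, i.e.\ $j \in \mathcal{I}$, showing that the RM rate-profile is decreasing.

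Having established this, Theorem~\ref{thm:dmin=wmin} yields $\dmin(\C) = \dmin(\mathcal{P}(\mathcal{I}))$. Since $\mathcal{P}(\mathcal{I}) = \RM(r,n)$ and the $\RM(r,n)$ code has minimum distance $2^{n-r}$, the claim follows. I do not expect any real obstacle here; the corollary is essentially a packaging of Theorem~\ref{thm:dmin=wmin} with the known RM parameters, and the only substantive check is the decreasing property of the RM rate-profile, which is routine.
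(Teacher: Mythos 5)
Your proposal is correct and follows essentially the same route as the paper: invoke Theorem~\ref{thm:dmin=wmin} together with the facts that $\mathcal{P}(\mathcal{I}) = \RM(r,n)$ and $\dmin(\RM(r,n)) = 2^{n-r}$. The only difference is that you explicitly verify that the $\RM(r,n)$ rate-profile is decreasing (correctly, by noting both generating relations of ``$\preccurlyeq$'' are weight-nondecreasing), whereas the paper simply asserts this.
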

\begin{proof}
    From the $\RM(r,n)$ rate-profile $\mathcal{I}$ follows that $\mathcal{I}$ is decreasing and that $\mathcal{P}(\mathcal{I}) = \RM(r,n)$.
    Thus, with Theorem~\ref{thm:dmin=wmin} we find $\dmin\left(\C\right) = \dmin\left(\mathcal{P}(\mathcal{I})\right) = 2^{n-r}$.
\end{proof}
\vspace{0.01cm}
Secondly, we examine the number of minimum weight codewords $\Awmin$ of a \ac{PTPC} $\C$ with arbitrary and with $\RM(r,n)$ rate-profile $\mathcal{I}$.
\begin{lemma}\label{lemma:lower bound decreasing monomial} (\hspace{1sp}\cite[Lemma 3]{Rowshan2023Impact})
    The number of $\wmin$-weight codewords of a pre-transformed polar code $\C$ with arbitrary rate-profile $\mathcal{I}$ is lower bounded as
    \begin{equation*}
        \Awmin(\C) \geq A_{\wmin}^{\mathrm{LB}}(\mathcal I) = \sum\limits_{i \in \mathcal I_{\wmin}^\circ} 2^{\left|\mathcal{I} \cap \mathcal{K}_i\right|}.
    \end{equation*}
\end{lemma}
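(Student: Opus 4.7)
The plan is to split the exact enumeration (\ref{eq:Awmin}) over $\mathcal I_{\wmin}$ into the contributions from non-pre-transformable and pre-transformable cosets, evaluate the first sum exactly using the observation made just above the lemma, and discard the second (non-negative) sum.

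Concretely, I would first recall the decomposition $\mathcal I_{\wmin} = \mathcal I^\circ_{\wmin} \sqcup \mathcal I^*_{\wmin}$, so that
\begin{equation*}
    A_{\wmin}(\C) \;=\; \sum_{i \in \mathcal I^\circ_{\wmin}} \bigl|\Qiwmin\bigr| \;+\; \sum_{i \in \mathcal I^*_{\wmin}} \bigl|\Qiwmin\bigr|.
\end{equation*}
The second sum is a sum of cardinalities, hence non-negative, and can be dropped to yield a valid lower bound.

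Next I would evaluate the first sum in closed form. For any $i \in \mathcal I^\circ_{\wmin}$, we have $\mathcal{F} \cap \mathcal{K}_i^\mathrm{c} = \emptyset$ by definition. As noted in the paragraph just preceding the lemma, this forces $f_i^*(\mathcal{I}) = i$ and $\mathcal{K}_i^\circ(\mathcal{I}) = \mathcal{K}_i$, so that (\ref{eq:Awmin subtree simplification}) specializes to
\begin{equation*}
    \bigl|\Qiwmin\bigr| \;=\; 2^{\left|\mathcal{I} \cap \mathcal{K}_i\right|},
\end{equation*}
independently of $\boldsymbol T$. Substituting this into the first sum immediately gives
\begin{equation*}
    A_{\wmin}(\C) \;\geq\; \sum_{i \in \mathcal I^\circ_{\wmin}} 2^{\left|\mathcal{I} \cap \mathcal{K}_i\right|} \;=\; A_{\wmin}^{\mathrm{LB}}(\mathcal I),
\end{equation*}
which is the claimed bound.

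There is essentially no hard step here, because all the heavy lifting (namely that every non-pre-transformable coset contributes a $\boldsymbol T$-independent full-subtree count of $2^{|\mathcal I \cap \mathcal K_i|}$ weight-$\wmin$ codewords) has already been discharged in the discussion of (\ref{eq:Awmin subtree simplification}) at the top of \autoref{sec:findings}. The only thing to be careful about is to note that the pre-transformable cosets can indeed be zero (this is precisely when the pre-transform eliminates all their $\wmin$-weight codewords), which is why the inequality is in general strict-capable and the bound cannot be tightened by including $\mathcal I^*_{\wmin}$ without further hypotheses on $\boldsymbol T$.
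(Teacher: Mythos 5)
Your proposal is correct and follows essentially the same route as the paper's proof: both restrict the exact enumeration in (\ref{eq:Awmin}) to the non-pre-transformable cosets $i \in \mathcal I^\circ_{\wmin}$, where $\mathcal{F} \cap \mathcal{K}_i^\mathrm{c} = \emptyset$ forces $\left|\Qiwmin\right| = 2^{\left|\mathcal{I} \cap \mathcal{K}_i\right|}$ independently of $\boldsymbol T$, and drop the non-negative contribution of the remaining cosets. Your version merely spells out the intermediate steps ($f_i^*(\mathcal{I}) = i$, $\mathcal{K}_i^\circ(\mathcal{I}) = \mathcal{K}_i$, and the specialization of (\ref{eq:Awmin subtree simplification})) that the paper compresses into ``as we saw above.''
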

\begin{proof}
    As we saw above, for a cosets with index $i \in \mathcal{I}^\circ_{\wmin}$ holds ${\mathcal{F} \cap \mathcal{K}_i^\mathrm{c} = \emptyset}$ and thus $\left|\Qiwmin\right| = 2^{\left|\mathcal{I} \cap \mathcal{K}_i\right|}$. By only considering these non pre-transformable cosets in~(\ref{eq:Awmin}), we obtain a lower bound on $\Awmin(\C)$.
\end{proof}

For an $\RM(r,n)$ rate-profile $\mathcal{I}$ we can find a closed form solution of the lower bound on the number of $\wmin$-weight codewords of a \ac{PTPC} $\C$ given in Lemma~\ref{lemma:lower bound decreasing monomial}.
First, we consider which cosets are non pre-transformable. 

\begin{lemma}\label{lem:rm non pretransformable}
A coset $\Ci$ with $\RM(r,n)$ rate-profile $\mathcal{I}$ of order $r\leq n-2$ is non pre-transformable if the support of its index $i \in \mathcal{I}_{\dmin}$ is $\mathcal{S}_{i} = \{x,y\} \cup \mathbb Z_{r+2:n}$, where $0 \leq x < y < r-2$.
\end{lemma}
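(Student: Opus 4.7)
The plan is to show $\mathcal{F} \cap \mathcal{K}_i^\mathrm{c} = \emptyset$, which for the $\RM(r,n)$ rate-profile $\mathcal{I} = \{j \in \mathbb{Z}_N : \w(j) \geq n-r\}$ reduces to proving $\w(j) \geq n-r$ for every $j \in \mathcal{K}_i^\mathrm{c}$.

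I would fix such a $j$ and let $l$ denote the most significant index at which the binary expansions of $j$ and $i$ differ. Since $j > i$, one has $j_{(l)}=1$ and $i_{(l)}=0$, so $l \in \mathcal{S}_i^\mathrm{c} = \mathbb{Z}_{r+2} \setminus \{x, y\}$ and $\mathcal{S}_j \cap \mathbb{Z}_{l+1:n} = \mathcal{S}_i \cap \mathbb{Z}_{l+1:n}$. By (\ref{eq:cond increase weight}), $j \in \mathcal{K}_i^\mathrm{c}$ is equivalent to $|\mathcal{S}_j \setminus \mathcal{S}_i| \geq 2$; since $\mathcal{S}_j$ and $\mathcal{S}_i$ agree above $l$, only position $l$ and indices strictly below $l$ contribute to $\mathcal{S}_j \setminus \mathcal{S}_i$, which forces $\mathcal{S}_j \cap \mathcal{S}_i^\mathrm{c} \cap \mathbb{Z}_l \neq \emptyset$.

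The next step is the decomposition
\begin{equation*}
    \w(j) = |\mathcal{S}_i \cap \mathbb{Z}_{l+1:n}| + 1 + |\mathcal{S}_j \cap \mathbb{Z}_l|,
\end{equation*}
where the first summand equals $(n-r-2) + |\{x, y\} \cap \mathbb{Z}_{l+1:n}|$ because $l \leq r+1$ implies $\mathbb{Z}_{r+2:n} \subseteq \mathbb{Z}_{l+1:n}$. A three-way case distinction on $l$ then finishes the estimate. If $l < x$, both $x$ and $y$ exceed $l$, giving $\w(j) \geq n-r+1$. If $x < l < y$, only $y$ exceeds $l$, but the index guaranteed in $\mathbb{Z}_l \cap \mathcal{S}_i^\mathrm{c} \cap \mathcal{S}_j$ supplies one more, again yielding $\w(j) \geq n-r+1$. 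If $l > y$, neither $x$ nor $y$ exceeds $l$, yet the same low-order contribution gives $\w(j) \geq (n-r-2) + 1 + 1 = n-r$. Hence $j \in \mathcal{I}$ in every case, so $\mathcal{K}_i^\mathrm{c} \subseteq \mathcal{I}$.

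The tightest case is $l > y$, where the bound is attained with equality; the whole argument then truly hinges on using $|\mathcal{S}_j \setminus \mathcal{S}_i| \geq 2$ to extract the extra low-order ``1'' bit of $j$ below $l$. The hypothesis $x < y < r-2$ enters only through the stated decomposition of $\mathcal{S}_i$ and guarantees that $l$ may take any value in $\{0,\dots,r+1\}\setminus\{x,y\}$ without further subcases, so I do not expect any genuine obstacle beyond the bookkeeping in this case analysis.
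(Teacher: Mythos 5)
Your proof is correct and is essentially the paper's argument run in the contrapositive direction: the paper shows every frozen $f>i$ (which has $\w(f)<n-r$) must have support $\{z\}\cup\mathbb{Z}_{r+2:n}$ with $z>y$ and hence lies in $\mathcal{K}_i$, while you show every $j\in\mathcal{K}_i^{\mathrm{c}}$ has $\w(j)\geq n-r$ and hence lies in $\mathcal{I}$ — both establish $\mathcal{F}\cap\mathcal{K}_i^{\mathrm{c}}=\emptyset$ from the same ingredients (the structure of $\mathcal{S}_i$, agreement above the leading differing bit, and criterion~(\ref{eq:cond increase weight})). Your case analysis on $l$ is sound (the binding case $l>y$ is handled correctly via the extra low-order element of $\mathcal{S}_j\setminus\mathcal{S}_i$), and as you note it in fact covers the range $y<r+2$ that is actually used later in Theorem~\ref{thm:rmbound}.
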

\begin{proof}
Following the $\RM(r,n)$ rate-profile $\mathcal{I}$, for $f \in \mathcal{F}$ holds $\w(f) < n-r$.
To fulfill $f > i$, the support of $f$ is found as $\mathcal{S}_f = \{z\} \cup \mathbb{Z}_{r+2:n}$ with $z > y$. 
Subsequently, $\mathcal{S}_f \setminus \mathcal{S}_i = \{z\}$ and thus $\w\left(\bar{i} \land f\right) = 1$, implying that $\mathcal{F} \cap \mathcal{K}_i^\mathrm{c} = \emptyset$.
By definition, $\Ci$ is non pre-transformable.
\end{proof}

Knowing which cosets are non pre-transformable, we can now find a closed form lower bound on $\Admin(\C)$ for an $\RM(r,n)$ rate-profile $\mathcal{I}$.

\begin{theorem}\label{thm:rmbound}
The number of codewords with weight $\dmin=2^{n-r}$ of a pre-transformed polar code $\mathcal P(\mathcal I,\,\boldsymbol T)$ with $\RM(r,n)$ rate-profile $\mathcal{I}$ of order $r\leq n-2$ is lower bounded as
\begin{equation}\label{eq:lowerbound_rm}
    A_{\dmin}^{\mathrm{LB}}(r) =
    \frac{8\cdot2^{3r} - 6\cdot2^{2r} + 2^r}{3}.
\end{equation}
\end{theorem}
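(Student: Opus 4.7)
The plan is to combine Lemma~\ref{lemma:lower bound decreasing monomial} with Lemma~\ref{lem:rm non pretransformable} and carry out one explicit counting. The former yields
\begin{equation*}
    A_{\dmin}^{\mathrm{LB}}(r) = \sum_{i \in \mathcal{I}^\circ_{\wmin}} 2^{\left|\mathcal{I} \cap \mathcal{K}_i\right|},
\end{equation*}
while the latter identifies $\mathcal{I}^\circ_{\wmin}$, for the $\RM(r,n)$ rate-profile, with the $\binom{r+2}{2}$ indices $i(x,y)$ whose support is $\mathcal{S}_{i(x,y)}=\{x,y\}\cup\mathbb{Z}_{r+2:n}$ with $0\le x<y\le r+1$. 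The task thus reduces to evaluating $\left|\mathcal{I}\cap\mathcal{K}_{i(x,y)}\right|$ in closed form and then summing a geometric series.

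For the counting, I would use the characterisation in~(\ref{eq:cond increase weight}): $j\in\mathcal{K}_i$ iff $j>i$ and $\bar{i}\land j$ has weight exactly one, i.e., $j$ has a unique 1-bit at some position $z\in\mathcal{S}_i^\mathrm{c}$ and agrees with $i$ on every position of $\mathcal{S}_i$ strictly above $z$ (else $j<i$). For $i=i(x,y)$ the admissible $z$ lies in $\mathbb{Z}_{r+2}\setminus\{x,y\}$, so I would split it into three regimes: $z>y$, $x<z<y$, and $z<x$. In each regime the ``free'' bits of $j$ are precisely those of $\{x,y\}$ strictly below $z$, and the $\RM(r,n)$ membership $\w(j)\ge n-r$ excludes exactly those subsets that drop too many bits from $\mathcal{S}_i$ below $z$. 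A direct count yields $3(r+1-y)$, $2(y-x-1)$, and $x$ admissible $j$'s in the three cases respectively, collapsing to $\left|\mathcal{I}\cap\mathcal{K}_{i(x,y)}\right|=3r+1-x-y$.

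Substituting and pulling out $2^{3r+1}$,
\begin{equation*}
    A_{\dmin}^{\mathrm{LB}}(r)=2^{3r+1}\sum_{0\le x<y\le r+1}2^{-(x+y)},
\end{equation*}
and I would evaluate the double sum as $\sum_{y=1}^{r+1}2^{-y}\sum_{x=0}^{y-1}2^{-x}$, reducing each inner sum to a finite geometric series. A short calculation shows the bracket equals $\tfrac{4}{3}-2^{-r}+\tfrac{1}{6}\cdot 4^{-r}$; multiplying by $2^{3r+1}$ produces the three terms $\tfrac{8\cdot 2^{3r}}{3}$, $-2\cdot 2^{2r}=-\tfrac{6\cdot 2^{2r}}{3}$, and $\tfrac{2^r}{3}$, which together give the claimed $(8\cdot 2^{3r}-6\cdot 2^{2r}+2^r)/3$. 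As a sanity check, $r=2$ gives $140$, in agreement with summing $2^{7-x-y}$ over the six pairs $(x,y)\in\{(0,1),(0,2),(0,3),(1,2),(1,3),(2,3)\}$.

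The only non-routine step is the counting of $\left|\mathcal{I}\cap\mathcal{K}_{i(x,y)}\right|$: the constraint $j>i$ \emph{forces} $j$ to retain the bits of $\mathcal{S}_i$ above $z$, while the constraint $\w(j)\ge n-r$ \emph{forbids} $j$ from dropping too many bits of $\mathcal{S}_i$ below $z$. Reconciling these two conditions consistently across the three regimes for $z$ is where an off-by-one or a miscounted subset of $\{x,y\}$ is most likely to creep in. Once the linear formula $3r+1-x-y$ is established, the remaining geometric-series manipulation is entirely routine.
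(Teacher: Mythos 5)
Your proof is correct and follows essentially the same route as the paper: restrict the lower bound of Lemma~\ref{lemma:lower bound decreasing monomial} to the non pre-transformable cosets identified in Lemma~\ref{lem:rm non pretransformable}, establish $\left|\mathcal{I}\cap\mathcal{K}_{i(x,y)}\right|=3r+1-x-y$, and evaluate the resulting double geometric series. The only substantive difference is in that middle step: the paper substitutes the support pattern $\{x,y\}\cup\mathbb{Z}_{r+2:n}$ into the counting formula $\left|\mathcal{I}\cap\mathcal{K}_i\right|=\left|\mathcal{S}_i^\mathrm{c}\right|+\sum_{l\in\mathcal{S}_i}\left|\mathcal{S}_i^\mathrm{c}\setminus\mathbb{Z}_l\right|$ cited from \cite{Rowshan2023Impact} for decreasing rate-profiles (giving $r+(r-x)+(r+1-y)$), whereas you re-derive the same number from first principles via (\ref{eq:cond increase weight}) by a case analysis on the position $z$ of the unique extra bit; your per-regime counts $3(r+1-y)$, $2(y-x-1)$ and $x$ are correct, the constraint $j>i$ forcing retention of the bits of $\mathcal{S}_i$ above $z$ and $\w(j)\geq n-r$ allowing at most one dropped bit below $z$. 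One small overstatement: Lemma~\ref{lem:rm non pretransformable} gives only a sufficient condition, so the family with support $\{x,y\}\cup\mathbb{Z}_{r+2:n}$ is in general a subset of $\mathcal{I}^\circ_{\wmin}$ rather than all of it (the paper accordingly writes $\mathcal{I}_{\dmin}'\subseteq\mathcal{I}_{\dmin}^\circ$ and chains two inequalities); this does not affect validity, since summing over a subset of the non pre-transformable cosets still yields a lower bound.
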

\begin{proof}
    From Corollary~\ref{cor:dmin rm} we know that $\dmin(\C) = 2^{n-r}$ for an $\RM(r,n)$ rate-profile $\mathcal{I}$.
    Following Lemma~\ref{lem:rm non pretransformable}, for order $r \leq n-2$,
    cosets having an index support $\mathcal{S}_{i} = \{x,y\} \cup \mathbb Z_{r+2:n}$, where $0 \leq x < y < r+2$, are non pre-transformable.
    As we have determined above, $\Admin(\Ci) = 2^{\left|\mathcal{I} \cap \mathcal{K}_i\right|}$ holds for these cosets $i \in \mathcal{I}_{\smash{\dmin}}' \subseteq \mathcal{I}_{\smash{\dmin}}^\circ$.
    As shown in \cite{Rowshan2023Impact}, for a decreasing (and thus also for \ac{RM}) rate-profile $\mathcal{I}$ follows
    \begin{equation*}
        \left|\mathcal{I} \cap \mathcal{K}_i\right|=\left|\mathcal S_i^\mathrm{c}\right|+\sum_{l \in \mathcal S_i} \left|\mathcal S_i^\mathrm{c} \setminus \mathbb Z_l\right|,
    \end{equation*}
    simply put, the number of ``0'' bits of coset index $i$ plus the number of more significant ``0'' bits for each ``1'' bit.
    For ${i \in  \mathcal {I}_{\smash{\dmin}}'}$ follows $\left|\mathcal S_i^\mathrm{c}\right|=n-\left|\mathcal S_i\right|=r$.
    Since ${\mathcal S_i^\mathrm{c} \subseteq \mathbb Z_{r+2}}$ and ${l \geq r+2}$ imply that $\mathcal S_i^\mathrm{c} \setminus \mathbb Z_l=\emptyset$, we find for $i \in \mathcal{I}'_{\smash{\dmin}}$ that
    \begin{equation*}
        \sum_{l \in \mathcal{S}_i} \left|\mathcal{S}_i^\mathrm{c} \setminus \mathbb{Z}_l\right| = \left|\mathcal{S}_i^\mathrm{c} \setminus \mathbb{Z}_x\right|+\left|\mathcal{S}_i^\mathrm{c} \setminus \mathbb{Z}_y\right|.
    \end{equation*}
    Further, the number of ``0'' bit positions larger than $x$ and $y$ is found as ${\left|\mathcal{S}_i^\mathrm{c} \setminus \mathbb{Z}_x\right| = r-x}$ and ${\left|\mathcal{S}_i^\mathrm{c} \setminus \mathbb{Z}_y\right| = r-y+1}$, respectively.
    Thus, we obtain $\left|\mathcal{I} \cap \mathcal{K}_i\right| = 3r+1-x-y$, and find with Lemma~\ref{lemma:lower bound decreasing monomial} the lower bound 
    \begin{equation*}
        \Admin(\C) \geq 
        \sum\limits_{i \in \mathcal{I}_{\dmin}^\circ} 2^{\left|\mathcal{I} \cap \mathcal{K}_i\right|} \geq
        \sum\limits_{i \in \mathcal{I}_{\dmin}'} 2^{\left|\mathcal{I} \cap \mathcal{K}_i\right|} = A_{\dmin}^{\mathrm{LB}}(r),
    \end{equation*}
    where
    \begin{equation*}
        A_{\dmin}^{\mathrm{LB}}(r) = \sum_{y=1}^{r+1} \sum_{x=0}^{y-1} 2^{3r+1-x-y} = 2^{3r+1} \cdot \sum_{y=1}^{r+1} \frac{1}{2^y} \cdot \sum_{x=0}^{y-1} \frac{1}{2^x}.
    \end{equation*}
    By employing the geometric series, we obtain the closed-form solution
    \begin{equation*}
    \begin{split}
        A_{\dmin}^{\mathrm{LB}}(r) 
        &= 2^{3r+2} \cdot \sum_{y=1}^{r+1} \frac{1}{2^y} \cdot \left(1-\frac{1}{2^y}\right)
        =
        2^{3r+2} \cdot \left(\sum_{y=0}^{r+1} \frac{1}{2^y} -  \sum_{y=0}^{r+1} \frac{1}{2^{2y}}\right)\\
        &=  2^{3r+2} \cdot \left(\frac{2}{3} - \frac{1}{2^{r+1}} + \frac{1}{3 \cdot 2^{2r+2}}\right) = \frac{8\cdot2^{3r} - 6\cdot2^{2r} + 2^r}{3}.
    \end{split}
    \end{equation*}
\end{proof}

\begin{proposition}
    The lower bound $A_{\dmin}^{\mathrm{LB}}(r)$ given in Theorem~\ref{thm:rmbound} is tight for a pre-transformed polar code $\mathcal P(\mathcal I,\,\boldsymbol T)$ with $\RM(r,n)$ rate-profile $\mathcal{I}$ of order $r = n-2$.
\end{proposition}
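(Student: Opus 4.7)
The plan is to show that for $r = n-2$ the two inequalities buried in the chain leading to $A_{\dmin}^{\mathrm{LB}}(r)$ in the proof of Theorem~\ref{thm:rmbound} both become equalities. Concretely, that chain reads
$$\Admin(\C) \;\geq\; \sum_{i \in \mathcal I_{\dmin}^\circ} 2^{|\mathcal I \cap \mathcal K_i|} \;\geq\; \sum_{i \in \mathcal I_{\dmin}'} 2^{|\mathcal I \cap \mathcal K_i|} \;=\; A_{\dmin}^{\mathrm{LB}}(r),$$
so saturating both inequalities will finish the job. The first equality follows once I establish $\mathcal I_{\dmin}^* = \emptyset$ (no pre-transformable minimum-weight cosets), and the second once $\mathcal I_{\dmin}' \subseteq \mathcal I_{\dmin}^\circ$ is shown to actually be an equality.

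For the first step, I would exploit the fact that the $\RM(n-2,n)$ rate-profile leaves only indices with $\w(f)\leq 1$ in the frozen set $\mathcal F$. Since any $i \in \mathcal I_{\dmin}$ has $\w(i)=2$, any frozen $f>i$ satisfies $\w(\bar i \land f) \leq \w(f) \leq 1$. By the criterion~(\ref{eq:cond increase weight}) this forces $f \in \mathcal K_i$, so $\mathcal F \cap \mathcal K_i^{\mathrm c} = \emptyset$ and $i \in \mathcal I_{\dmin}^\circ$. Hence $\mathcal I_{\dmin}^\circ = \mathcal I_{\dmin}$, $\mathcal I_{\dmin}^* = \emptyset$, and combining (\ref{eq:Awmin}) with $|\Qiwmin| = 2^{|\mathcal I \cap \mathcal K_i|}$ for non pre-transformable cosets collapses the first inequality to an equality.

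For the second step, I would observe that for $r = n-2$ the trailing block $\mathbb Z_{r+2:n}$ appearing in Lemma~\ref{lem:rm non pretransformable} is empty, so its support condition $\mathcal S_i = \{x,y\} \cup \mathbb Z_{r+2:n}$ reduces to $\mathcal S_i = \{x,y\}$ with $0 \le x < y < n$. This is precisely the characterization of the weight-$2$ indices, i.e.\ of all of $\mathcal I_{\dmin}$. Therefore $\mathcal I_{\dmin}' = \mathcal I_{\dmin} = \mathcal I_{\dmin}^\circ$, the two sums in the chain coincide, and concatenating the two equalities yields $\Admin(\C) = A_{\dmin}^{\mathrm{LB}}(r)$.

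I do not expect a genuinely hard step here; the argument is essentially the observation that the extremal order $r=n-2$ shrinks the frozen set down to indices of weight at most $1$, which in turn automatically funnels every minimum-weight coset into the ``easy'' non pre-transformable family already catalogued in Lemma~\ref{lem:rm non pretransformable}. The only point that needs a small amount of care is verifying that both inequalities in the original chain saturate \emph{simultaneously}, but this is immediate from the two set identities $\mathcal I_{\dmin}^\circ = \mathcal I_{\dmin}$ and $\mathcal I_{\dmin}' = \mathcal I_{\dmin}$ established above.
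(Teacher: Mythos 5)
Your proof is correct and follows essentially the same route as the paper: for $r=n-2$ every index $i\in\mathcal{I}_{\dmin}$ has support $\{x,y\}$, so Lemma~\ref{lem:rm non pretransformable} (read with the bound $y<r+2$, as used in the proof of Theorem~\ref{thm:rmbound}) makes every minimum-weight coset non pre-transformable, and summing the exact coset counts $2^{|\mathcal{I}\cap\mathcal{K}_i|}$ over all of $\mathcal{I}_{\dmin}$ via (\ref{eq:Awmin}) gives equality with $A_{\dmin}^{\mathrm{LB}}(r)$. Your first step merely re-derives the $r=n-2$ instance of Lemma~\ref{lem:rm non pretransformable} directly from $\w(f)\le 1$ and (\ref{eq:cond increase weight}), which the paper simply cites, so the two arguments coincide in substance.
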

\begin{proof}
    For order $r = n-2$ holds $\w(i)=n-r=2$ with $i \in \mathcal{I}_{\dmin}$. In this case, Lemma~\ref{lem:rm non pretransformable} gives that a coset with index support $\mathcal{S}_i = \{x,y\}$, where $0 \leq x < y < n$, is non pre-transformable. Since this describes all coset indices ${i \in \mathcal{I}_{\dmin}}$, we find with $\Admin(\Ci)=2^{\left|\mathcal{I} \cap \mathcal{K}_i\right|}$ that $\Admin(\C) = A_{\dmin}^{\mathrm{LB}}(r)$.
\end{proof}

\ifPagebreaks
\clearpage
\fi

\section{Results}
\label{sec:results}
\begin{table*}[htp]
    \centering
    \resizebox{\linewidth}{!}{
        \footnotesize
        \begin{tabular}{*{5}{crr}}
    \toprule
    & \multicolumn{2}{c}{$r = 2$} && \multicolumn{2}{c}{$r = 3$} && \multicolumn{2}{c}{$r = 4$} && \multicolumn{2}{c}{$r = 5$} && \multicolumn{2}{c}{$r = 6$}\\
    \cmidrule{2-3} \cmidrule{5-6} \cmidrule{8-9}  \cmidrule{11-12} \cmidrule{14-15}
    $n$ & $p(x)$ & $A_{d_\mathrm{min}}$ && $p(x)$ & $A_{d_\mathrm{min}}$ && $p(x)$ & $A_{d_\mathrm{min}}$ && $p(x)$ & $A_{d_\mathrm{min}}$ && $p(x)$ & $A_{d_\mathrm{min}}$\\ 
    \midrule
    5 & $1027_8$ & 236\\ 
    6 & $400115_8$ & 252 && $1027_8$ & 2136\\
    7 & $410073_8$ & 260 && $400115_8$ & 2136 && $2724313_8$ & 13920\\
    8 & $410073_8$ & 292 && $410073_8$ & 2152 && $2724313_8$ & 13920 && $4347071_8$ & 98200\\
    9 & $410073_8$ & 424 && $410073_8$ & 2300 && $2724313_8$ & 13968 && $4347071_8$ & 98200 && $5767471_8$ & 737496\\
    10 & $410073_8$ & 952 && $410073_8$ & 3584 && $7021047_8$ & 14604 && $4347071_8$ & 98264 && $5767471_8$ & 737496\\ 
    11 & $\phantom{0}410073_8$ & \phantom{00}3048 && $\phantom{0}410073_8$ & \phantom{0}14208 && $7021047_8$ & \phantom{0}25936 && $4347071_8$ & 100900 && $5767471_8$ & 737624\\
    \bottomrule
    \label{tab:PACpolynomials}
\end{tabular}
    }
    \caption{\footnotesize Polynomials $p(x)$  that achieve the minimum $A_{d_\mathrm{min}}$ of all $\operatorname{deg}(p(x)) \leq 20$ polynomials for \ac{PAC} codes with different $\RM\left(r,n\right)$ rate-profiles.}
    \label{tab:pac_poly}
    \vspace{-.5cm}
\end{table*}

\subsection{Algorithmic Complexity}
\begin{figure}[htp]
    \centering
    \resizebox{\columnwidth}{!}{\begin{tikzpicture}
\begin{axis}[
	width=\linewidth,
	height=0.67\linewidth,
	scale=1,
	grid style={dotted,gray},
	xmajorgrids,
	yminorticks=true,
	ymajorgrids,
	yminorticks=true,
	tick align=outside,
	tick pos=left,
	legend columns=1,
	legend pos=north west,   
	legend cell align={left},
	legend style={at={(axis description cs:0.01,0.6)},fill,fill opacity=0.75, text opacity=1, draw=none},
    grid style={on layer=axis background},
	xtick={5,7,...,13},
	tick style={color=black},
	xlabel={$n = \log_2{N}$},
	legend image post style={mark indices={}},
	ymode=log,
	mark size=1.5pt,
	xmin=5,
	xmax=13,
    ymin=236, %
	ymax=1e18, %
]

\addplot [color=apfelgruen,densely dotted,line width=\lineWidth,mark=x,mark size=\markSize,mark options={solid}] %
table[col sep=comma]{
5, 4806
7, 2039786
9, 1992493331
11, 7921316325039
13, 128488246466532412
};
\label{plot:Pre-transform checks RS}

\addplot [color=apfelgruen,densely dashed,line width=\lineWidth,mark=x,mark size=\markSize,mark options={solid}] %
table[col sep=comma]{
5, 2115
7, 874518
9, 940297767
11, 3262281615573
13, 38368590563625357
};
\label{plot:Tree updates RS}

\addplot [color=apfelgruen,solid,line width=\lineWidth,mark=x,mark size=\markSize,mark options={solid}] %
table[col sep=comma]{
5, 284
7, 87294
9, 52823191
11, 113560498085
13, 952165734799676
};
\label{plot:Visited cosets RS}

\addplot [color=mittelblau,densely dashed,line width=\lineWidth,mark=o,mark size=\markSize,mark options={solid}] %
table[col sep=comma]{
5, 717
7, 27140
9, 327039
11, 7277071
13, 3294314916
};
\label{plot:Tree updates EL}

\addplot [color=mittelblau,solid,line width=\lineWidth,mark=o,mark size=\markSize,mark options={solid}] %
table[col sep=comma]{
5, 295
7, 4976
9, 32493
11, 327168
13, 30544839
};
\label{plot:Visited cosets EL}

\addplot [color=pink,densely dotted,line width=\lineWidth,mark=+,mark size=\markSize,mark options={solid}] %
table[col sep=comma]{
5, 398
7, 12912
9, 101416
11, 2188763
13, 1288386923
};
\label{plot:Pre-transform checks proposed}

\addplot [color=pink,densely dashed,line width=\lineWidth,mark=+,mark size=\markSize,mark options={solid}] %
table[col sep=comma]{
5, 345
7, 6642
9, 49383
11, 508586
13, 41072372
};
\label{plot:Tree updates proposed}

\addplot [color=pink,solid,line width=\lineWidth,mark=+,mark size=\markSize,mark options={solid}] %
table[col sep=comma]{
5, 284
7, 4836
9, 31239
11, 316734
13, 30459464
};
\label{plot:Visited cosets proposed}

\addplot [color=black,dotted,line width=\lineWidth,mark=Mercedes star,mark size=\markSize,mark options={solid}]
table[col sep=comma]{
5,620
7,94488
9,52955952
11,113562778208
13,952165772592320
15,31566670174891755904
17,4161765486531358105244416
19,2188358249970312675214379656704
};
\label{plot:Admin RM plain}

\addplot [color=black,dotted,line width=\lineWidth,mark=Mercedes star flipped,mark size=\markSize,mark options={solid}]
table[col sep=comma]{
5, 236
7, 2136
9, 15216
11, 103148
13, 1528328
15, 872804488
};
\label{plot:Admin PAC}

\newcommand\testTest{\widthof{content}}

\coordinate (legend) at (axis description cs:0.01,1);
\coordinate (legend Awmin) at (axis description cs:0.01,0.59);
\end{axis}

\matrix [
        draw=none,
        row sep=-\pgflinewidth,
        matrix of nodes,
        nodes={anchor=center},
        anchor=north west,
        font=\footnotesize,
        inner sep=2pt,
        column 4/.style={nodes={anchor=west}},
    ] (mat) at (legend) {
        \hspace{1sp}\cite{Rowshan2023Impact}\vphantom{p} & \hspace{1sp}\cite{PartialEnumPAC}\vphantom{p} & Prop. & Metric\vphantom{p} \\
        \hline
        \ref{plot:Pre-transform checks RS}\vphantom{f} & \ref{plot:Tree updates EL}\vphantom{f} & \ref{plot:Pre-transform checks proposed}\vphantom{f} & Pre-transform checks\vphantom{f}\\
        \ref{plot:Tree updates RS}\vphantom{Mp} & \ref{plot:Tree updates EL}\vphantom{Mp} & \ref{plot:Tree updates proposed}\vphantom{Mp} & Message updates\\
        \ref{plot:Visited cosets RS}\vphantom{V} & \ref{plot:Visited cosets EL}\vphantom{V} & \ref{plot:Visited cosets proposed}\vphantom{V} & Visited sub-trees\vphantom{V}\\
    };

\matrix [
        draw=none,
        row sep=-\pgflinewidth,
        matrix of nodes,
        nodes={anchor=west},
        anchor=north west,
        font=\footnotesize,
        inner sep=2pt,
    ] (mat2) at (legend Awmin) {
        $\Admin$ \\
        \hline
        \ref{plot:Admin RM plain} Plain\\
        \ref{plot:Admin PAC} PAC\\
    };
\end{tikzpicture}}
    \caption{\footnotesize Complexity comparison of the $\dmin$-weight codeword enumeration for $\RM(\nicefrac{(n-1)}{2},n)$ rate-profile \ac{PAC} codes with polynomial $p(x)\mathrel{\widehat{=}}5767471_8$. Note that the pre-transform checks coincide with the message updates in \cite{PartialEnumPAC}.}
    \label{fig:PAC_complexity_comparison}
\end{figure}
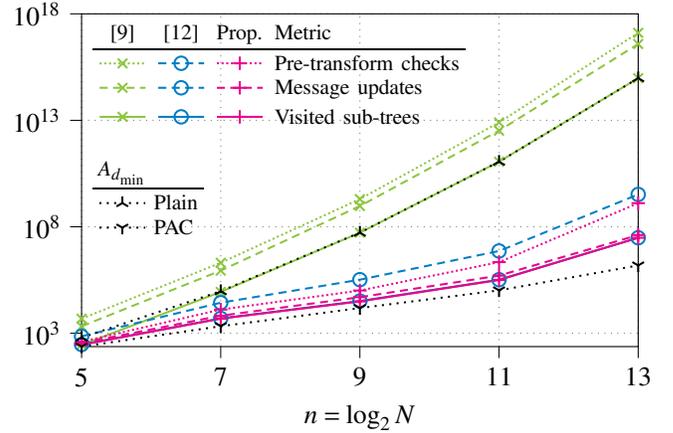
First, we evaluate the computational complexity of the proposed minimum weight codeword enumeration algorithm (Algorithm~\ref{alg:error_coeff}) and compare it to the algorithms proposed in \cite{Rowshan2023Impact} and \cite{PartialEnumPAC}. To this end, we count the number of visited sub-trees in the search tree, the number of checks, whether a message fulfills the precoding constraints, and the number of message updates. Fig.~\ref{fig:PAC_complexity_comparison} shows these three metrics for rate-$\nicefrac{1}{2}$ \ac{PAC} codes with \ac{RM} rate-profile and precoding polynomial $p(x)\mathrel{\widehat{=}}5767471_8$ in terms of the logarithmic block length. For reference, we also plot the number of minimum weight codewords  $A_{d_\mathrm{min}}$ for both the plain \ac{RM} code and the resulting \ac{PAC} code.
As we can see, in all metrics, our proposed algorithm is several orders of magnitude less complex than the one from \cite{Rowshan2023Impact}. The difference is more significant for long block lengths. For example, for the $(8192,4096)$ code, the number of pre-transform checks is reduced by a factor of $10^8$. Compared to \cite{PartialEnumPAC}, the proposed method mainly reduces the number of message updates.
The complexity of the proposed algorithm scales roughly with $A_{d_\mathrm{min}}$ of the pre-transformed code, rather than the plain code.

Computing $\Admin=3120$ of the \ac{PAC} code proposed in \cite{arikan2019pac} with $\RM(3,7)$ rate-profile and  $p(x)\mathrel{\widehat{=}}155_8$ takes about \qty{1}{ms} with our nonparallel C implementation of Algorithm~\ref{alg:error_coeff}.%

\subsection{Number of Minimum Weight Codewords of \acp{PTPC}}
Using  Algorithm~\ref{alg:error_coeff}, we find \ac{PAC} codes with \ac{RM} rate-profiles that are optimal with respect to the number of minimum weight codewords. 
For $2\le r < n-2$, $5 \le n \le 11$, optimal convolutional polynomials $p(x)$ of maximum degree 20 were found using exhaustive search and listed in octal notation in Table~\ref{tab:pac_poly}. 
If multiple polynomials achieved the minimum $\Admin$, we list the one with the lowest degree and fewest non-zero coefficients.
Following Corollary~\ref{cor:dmin rm}, the codes have the same minimum distance as the plain \ac{RM} codes, i.e., $d_\mathrm{min}=2^{n-r}$.

\begin{figure}[htp]
	\centering
    \resizebox{\columnwidth}{!}{\begin{tikzpicture}
\begin{axis}[
    width=\linewidth,
    height=0.67\linewidth,
    scale=1,
    grid style={dotted,gray},
    xmajorgrids,
    yminorticks=true,
    ymajorgrids,
    yminorticks=true,
    tick align=outside,
    tick pos=left,
    legend columns=1,
    legend pos=north west,   
    legend cell align={left},
    legend style={at={(axis description cs:0.01,0.99)},fill opacity=0.75, text opacity=1, draw=none, row sep=0pt, inner sep=2pt},
    xtick={5,7,...,21},
    tick style={color=black},
    xlabel={$n = \log_2{N}$},
    ylabel={$\Admin$},
    legend image post style={mark indices={}},
    ymode=log,
    mark size=1.5pt,
    xmin=5,
    xmax=21,
    ymin=64, %
    ymax=1e10, %
]

\path [fill=violettblau!35]
(axis cs:5,620)
--(axis cs:7,3610)
--(axis cs:9,17588)
--(axis cs:11,103038)
--(axis cs:13,743627)
--(axis cs:15,5749458)
--(axis cs:17,45331203)
--(axis cs:19,360230562)
--(axis cs:21,2872468478)
--(axis cs:21,2872465657)
--(axis cs:19,360223517)
--(axis cs:17,45320317)
--(axis cs:15,5741105)
--(axis cs:13,739555)
--(axis cs:11,99770)
--(axis cs:9,14592)
--(axis cs:7,2043)
--(axis cs:5,236)
--cycle;

\addplot [color=violettblau,line width = 0.75pt,mark=+,mark size=\markSize,mark options={solid}, forget plot] %
table[col sep=comma]{
5,378.00
7,2766.74
9,15936.22
11,101272.50
13,741520.24
15,5745116.69
17,45326196.69
19,360227039.5
21,2872467067.5
};
\label{plot:simulation_rm}

\addplot [color=apfelgruen,line width = 0.75pt,mark=square,mark size=\markSize,mark options={solid}, densely dashed, forget plot] %
table[col sep=comma]{
5,380.0
7,2766.90625
9,15936.337757110596
11,101280.007599038
13,741520.0000943244
15,5745060.000000669
17,45326388.5
19,360222723.28125
21,2872471680.102539
};
\label{plot:probabilistic_rm}

\addplot [color=pink,line width = 0.75pt,mark=x,mark size=\markSize,mark options={solid}, dashdotted, forget plot] %
table[col sep=comma]{
5,140
7,1240
9,10416
11,85344
13,690880
15,5559680
17,44608256
19,357389824
21,2861214720
};
\label{plot:lowerbound_rm}

\addplot [color=anthrazit!67,line width = 0.75pt,mark=*,mark size=1pt,mark options={solid}, loosely dotted, forget plot] %
table[col sep=comma]{
5,64
7,512
9,4096
11,32768
13,262144
15,2097152
17,16777216
19,134217728
21,1073741824
};
\label{plot:OrderAnalysis}

\addplot [color=black,line width=\lineWidth,mark=Mercedes star,mark size=\markSize,mark options={solid},  dotted]
table[col sep=comma]{
5,620
7,94488
9,52955952
11,113562778208
13,952165772592320
15,31566670174891755904
17,4161765486531358105244416
19,2188358249970312675214379656704
};
\label{plot:RMplain}
\addlegendentry{\footnotesize Plain};

\addplot [color=black,line width = 0.75pt, dotted, mark=Mercedes star flipped,mark size=\markSize,mark options={solid}] %
table[col sep=comma]{
5,236
7,2136
9,13968
11,100900
};
\label{plot:PACoptimized}
\addlegendentry{\footnotesize Optimized PAC};

\coordinate (legend random precoded) at (axis description cs:0.99,0.01);
\end{axis}

\matrix [
        draw=none,
        text opacity=1,
        row sep=-\pgflinewidth,
        matrix of nodes,
        nodes={anchor=west},
        anchor=south east,
        font=\footnotesize,
        inner sep=2pt,
    ] (mat) at (legend random precoded) {
        Randomly pre-transformed \\
        \hline
        \ref{plot:simulation_rm} Computation average\\
        \ref{plot:probabilistic_rm} Probabilistic \cite{Li2021Weightspectrum} \\
        \ref{plot:lowerbound_rm} Lower bound (\ref{eq:lowerbound_rm})\\
        \ref{plot:OrderAnalysis} Order analysis \cite{li2023weightspecturmimprovement}\\
    };

\end{tikzpicture}}
	\caption{\footnotesize Number of $d_\mathrm{min}$-weight codewords of convolutional (\ac{PAC}) and randomly pre-transformed $\RM(\nicefrac{(n-1)}{2},n)$ codes with $32 \le N \le 2097152$. 
    }
	\label{fig:RandomPrecoding}
\end{figure}
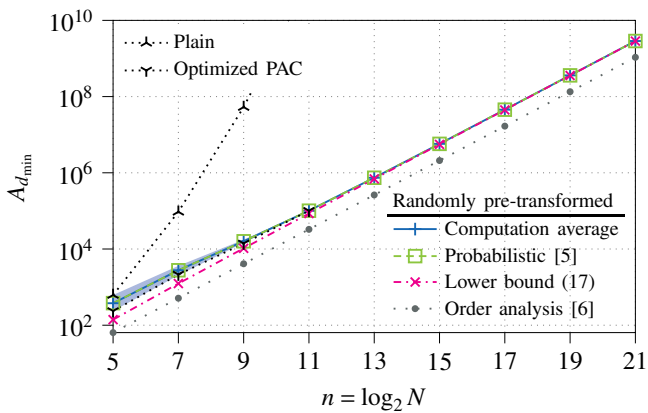
Next, we compare various precoder designs for \acp{PTPC} with \ac{RM} rate-profiles. Fig.~\ref{fig:RandomPrecoding} shows the error coefficient $\Admin$ for convolutional and random precoders, as well as the related bounds for rate-$\nicefrac{1}{2}$ \ac{RM} codes with $5\le n \le 21$.
We randomly generated at least 100 pre-transform matrices for $n<19$ and two for $n\geq19$ and computed $\Admin$ using Algorithm~\ref{alg:error_coeff}. We plot the range of the resulting values and their average. We see that the results match precisely the predictions of the probabilistic method from \cite{Li2021Weightspectrum} and thus, verify their accuracy. Moreover, the lower bound given in (\ref{eq:lowerbound_rm}) closely approaches these two curves for large $N$. The order analysis given in \cite{li2023weightspecturmimprovement} predicts the correct scaling, however consistently underestimates $\Admin$.
We want to emphasize that it is exactly the complexity reduction of our proposed algorithm that allows the explicit computation of the number of minimum weight codewords for specific pre-transformations for codes as long as $N=2097152$, which was not computationally feasible before.
The optimized \ac{PAC} codes (listed in Table~\ref{tab:pac_poly}) have a comparably low number of $\dmin$-weight codewords and the curve lies on top of the minimum achieved using random pre-transformations.

\ifPagebreaks
\newpage
\fi
\ifPagebreaks
\clearpage
\fi
\section{Conclusion}\label{sec:conc}
In this paper, we proposed a novel, low-complexity algorithm for enumerating minimum-weight codewords of \acp{PTPC} based on tree intersection.
The algorithm shows significantly lower complexity than the state-of-the-art methods described in \cite{Rowshan2023Impact} and \cite{PartialEnumPAC}.
Therefore, it enables for the first time to directly take the actual number of minimum-weight codewords into account in the code design. As an application, optimal convolutional polynomials for \ac{PAC} codes have been determined through a now possible exhaustive search.

\bibliographystyle{IEEEtran}
\bibliography{main}

\begin{thebibliography}{10}
\providecommand{\url}[1]{#1}
\csname url@samestyle\endcsname
\providecommand{\newblock}{\relax}
\providecommand{\bibinfo}[2]{#2}
\providecommand{\BIBentrySTDinterwordspacing}{\spaceskip=0pt\relax}
\providecommand{\BIBentryALTinterwordstretchfactor}{4}
\providecommand{\BIBentryALTinterwordspacing}{\spaceskip=\fontdimen2\font plus
\BIBentryALTinterwordstretchfactor\fontdimen3\font minus
  \fontdimen4\font\relax}
\providecommand{\BIBforeignlanguage}[2]{{%
\expandafter\ifx\csname l@#1\endcsname\relax
\typeout{** WARNING: IEEEtran.bst: No hyphenation pattern has been}%
\typeout{** loaded for the language `#1'. Using the pattern for}%
\typeout{** the default language instead.}%
\else
\language=\csname l@#1\endcsname
\fi
#2}}
\providecommand{\BIBdecl}{\relax}
\BIBdecl

\bibitem{ArikanMain}
E.~{Arıkan}, ``{Channel Polarization: A Method for Constructing
  Capacity-Achieving Codes for Symmetric Binary-Input Memoryless Channels},''
  \emph{IEEE Trans. Inf. Theory}, vol.~55, no.~7, pp. 3051--3073, Jul. 2009.

\bibitem{subCodes}
P.~Trifonov and V.~Miloslavskaya, ``{Polar Subcodes},'' \emph{IEEE J. Sel.
  Areas Commun.}, vol.~34, no.~2, pp. 254--266, Feb. 2016.

\bibitem{talvardyList}
I.~Tal and A.~Vardy, ``{List Decoding of Polar Codes},'' \emph{IEEE Trans. Inf.
  Theory}, vol.~61, no.~5, pp. 2213--2226, May 2015.

\bibitem{arikan2019pac}
E.~Arıkan, ``From sequential decoding to channel polarization and back
  again,'' 2019.

\bibitem{Li2021Weightspectrum}
Y.~Li, H.~Zhang, R.~Li, J.~Wang, G.~Yan, and Z.~Ma, ``{On the Weight Spectrum
  of Pre-Transformed Polar Codes},'' in \emph{IEEE International Symposium on
  Information Theory (ISIT)}, 2021, pp. 1224--1229.

\bibitem{li2023weightspecturmimprovement}
\BIBentryALTinterwordspacing
Y.~Li, Z.~Ye, H.~Zhang, J.~Wang, G.~Yan, and Z.~Ma, ``{On the Weight Spectrum
  Improvement of Pre-transformed Reed-Muller Codes and Polar Codes},'' 2023.
  [Online]. Available: \url{https://arxiv.org/abs/2307.06599}
\BIBentrySTDinterwordspacing

\bibitem{bardet_polar_automorphism}
M.~{Bardet}, V.~{Dragoi}, A.~{Otmani}, and J.~{Tillich}, ``{Algebraic
  Properties of Polar Codes From a New Polynomial Formalism},'' in \emph{IEEE
  International Symposium on Information Theory (ISIT)}, 2016, pp. 230--234.

\bibitem{Yao2021polarweightdistribution}
H.~Yao, A.~Fazeli, and A.~Vardy, ``{A Deterministic Algorithm for Computing the
  Weight Distribution of Polar Codes},'' in \emph{IEEE International Symposium
  on Information Theory (ISIT)}, 2021, pp. 1218--1223.

\bibitem{Rowshan2023Impact}
M.~Rowshan and J.~Yuan, ``{On the Minimum Weight Codewords of PAC Codes: The
  Impact of Pre-Transformation},'' \emph{IEEE Journal on Selected Areas in
  Information Theory}, vol.~4, pp. 487--498, 2023.

\bibitem{RowshanFormation}
\BIBentryALTinterwordspacing
M.~Rowshan, S.~H. Dau, and E.~Viterbo, ``{On the Formation of Min-weight
  Codewords of Polar/PAC Codes and Its Applications},'' 2023. [Online].
  Available: \url{https://arxiv.org/abs/2111.08843}
\BIBentrySTDinterwordspacing

\bibitem{Miloslavskaya2022partialweightdistribution}
V.~Miloslavskaya, B.~Vucetic, and Y.~Li, ``{Computing the Partial Weight
  Distribution of Punctured, Shortened, Precoded Polar Codes},'' \emph{IEEE
  Transactions on Communications}, vol.~70, no.~11, pp. 7146--7159, 2022.

\bibitem{PartialEnumPAC}
M.~Ellouze, R.~Tajan, C.~Leroux, C.~Jégo, and C.~Poulliat, ``{Low-Complexity
  Algorithm for the Minimum Distance Properties of PAC Codes},'' in \emph{12th
  International Symposium on Topics in Coding (ISTC)}, 2023, pp. 1--5.

\bibitem{UnionBound}
X.~Ma, J.~Liu, and B.~Bai, ``{New Techniques for Upper-Bounding the ML Decoding
  Performance of Binary Linear Codes},'' \emph{IEEE Transactions on
  Communications}, vol.~61, no.~3, pp. 842--851, 2013.

\end{thebibliography}

\end{NoHyper}
\end{document}